\title{Constant approximation for fault-tolerant median problems\\via iterative rounding}
\author{Shichuan Deng}
\date{Tsinghua University, China}
\newcommand{\floor}[1]{\left\lfloor #1\right\rfloor}
\newcommand{\etal}{\textit{et~al.}\xspace}
\newcommand{\etalcite}[1]{\textit{et~al.}~\cite{#1}}
\newcommand{\R}{\mathbb{R}}
\newcommand{\Z}{\mathbb{Z}}
\newcommand{\calB}{\mathcal{B}}
\newcommand{\calI}{\mathcal{I}}
\newcommand{\calC}{\mathcal{C}}
\newcommand{\calF}{\mathcal{F}}
\newcommand{\calM}{\mathcal{M}}
\newcommand{\calU}{\mathcal{U}}
\newcommand{\calX}{\mathcal{X}}
\newcommand{\ball}{\textsc{\small Ball}}
\newcommand{\ib}{B}
\newcommand{\rg}{{\gamma^{-1}}}
\newcommand{\dmd}{n}
\newcommand{\que}{{\mathscr{Q}}}
\newcommand{\opt}{\mathsf{OPT}}
\newcommand{\dav}{d_{\mathsf{av}}}
\newcommand{\dma}{d_{\mathsf{max}}}
\newcommand{\ftotal}{f_{\mathsf{total}}}
\newcommand{\ftfl}{\textsf{FTFL}\xspace}
\newcommand{\mmd}{\textsf{MtMed}\xspace}
\newcommand{\kmd}{\textsf{KpMed}\xspace}
\newcommand{\ftm}{\textsf{FTMtMed}\xspace}
\newcommand{\ftk}{\textsf{FTKpMed}\xspace}
\newtheorem{theorem}{Theorem}
\newtheorem{lemma}[theorem]{Lemma}
\theoremstyle{definition}
\theoremstyle{remark}
\newtheorem{remark}{Remark}
\begin{document}
\maketitle

\begin{abstract}
    In this paper, we study the \emph{fault-tolerant matroid median} and \emph{fault-tolerant knapsack median} problems. 
    These two problems generalize many fundamental clustering and facility location problems, such as uniform fault-tolerant $k$-median, uniform fault-tolerant facility location, matroid median, knapsack median, etc.
    We present a versatile iterative rounding framework and obtain a unifying constant-factor approximation algorithm.
\end{abstract}

\section{Introduction}

We study matroid median and knapsack median problems under the notion of fault-tolerance. 
In both matroid median (\mmd) and knapsack median (\kmd), we are given clients $\calC$, facilities $\calF$, a finite metric $d$ on $\calC\cup\calF$, and the goal is to select a subset $F\subseteq\calF$ so as to minimize $\sum_{j\in\calC}d(j,F)$, where $d(j,F):=\min_{i\in F}d(j,i)$. 
In \mmd, we require $F$ to be an independent set of a given matroid $\calM$. 
In \kmd, each facility $i\in\calF$ has a weight $w_i\geq0$ and the selected facilities $F$ should have a total weight no more than a given threshold $W$; that is, $\sum_{i\in F}w_i\leq W$.
\mmd and \kmd generalize the classic $k$-median problem (e.g., \cite{charikar2012dependent}). 
The best approximation algorithms have factors $7.081$ \cite{krishnaswamy2018constant} and $(6.387+\epsilon)$ \cite{gupta2020structural}, respectively.

Formally, in fault-tolerant matroid median (\ftm), we are given a finite metric space $(\calC\cup\calF,d)$, a matroid $\calM=(\calF,\calI)$ and a \emph{requirement} $r\in\mathbb{Z}_+$. Each facility $i\in\calF$ has an \emph{opening cost} $f_i\geq0$. 
We need to open facilities in an independent set $F\in\calI$ and assign the nearest $r$ open facilities to each client. The service cost incurred for client $j$ is the sum of distances from $j$ to its assigned facilities, and the goal is to minimize the sum of facility opening costs and client service costs, i.e., $
\sum_{i\in F}f_i+\sum_{j\in\calC}d_r(j,F)$, where $d_r(j,F):=\min_{F'\subseteq F:|F'|=r}\sum_{i\in F'}d(i,j)$. 
This is a generalization of \mmd, which is the setting where $r=1$ and each $f_i=0$.
Similarly, we define the fault-tolerant knapsack median problem (\ftk) based on \kmd and fault-tolerant assignments of open facilities.

In fact, one may consider a more general formulation of fault-tolerance, such that each client $j$ has a \emph{distinct} requirement $r_j\in\mathbb{Z}_+$ governing the number of open facilities assigned to it in a feasible solution. We call these formulations \emph{non-uniform} for clarity, and this paper is only concerned with \emph{uniform} requirements.
Fault-tolerant facility location (\ftfl) is first introduced by Jain and Vazirani \cite{jain2004approximation} as a natural generalization of uncapacitated facility location. The current best approximation ratios are 1.725 \cite{byrka2010fault} and 1.52 \cite{swamy2008fault} for non-uniform and uniform \ftfl, respectively.
Hajiaghayi~\etalcite{hajiaghayi2016constant} consider fault-tolerant $k$-median with non-uniform requirements and give a 93-approximation.

Clustering problems with stronger combinatorial constraints have recently received more research interest. 
Hochbaum and Shmoys \cite{hochbaum1986unified} study knapsack center and give a 3-approximation.
Chen~\etal\cite{chen2016matroid} devise a 3-approximation for matroid center. 
Kumar \cite{kumar2012constant} gives the first constant-factor approximation for knapsack median, which is later improved in \cite{byrka2015knapsack,charikar2012dependent,gupta2020structural,krishnaswamy2018constant,swamy2016improved}. 
Krishnaswamy~\etal\cite{krishnaswamy2011matroid} gives a constant-factor approximation for matroid median, which is later improved in \cite{charikar2012dependent,krishnaswamy2018constant,swamy2016improved}.

\section{Fault-tolerant matroid median}

We present the natural LP relaxation for \ftm, where $x_{ij}\in[0,1]$ represents the extent of assignment between facility $i$ and client $j$, and $y_i\in[0,1]$ represents the extent we open facility $i$.
Let $r_\calM:2^\calF\rightarrow\mathbb{Z}$ be the rank function of matroid $\calM$.
\begin{alignat*}{1}
    \text{min\quad} &
    \sum_{i\in\calF}f_iy_i +\sum_{j\in\calC}\sum_{i\in\calF}x_{ij}d(i,j)\tag{$\operatorname{M-LP}$}\label{lp:natural:matroid}\\
    \text{s.t.\quad}
    & \sum_{i\in\calF}x_{ij} = r\quad\forall j\in\calC\\
    & 0\leq x_{ij} \leq y_i\leq1\quad\forall i\in\calF,j\in\calC\\
    & \sum_{i\in S}y_i \leq r_\calM(S)\quad\forall S\subseteq\calF.
\end{alignat*}
Using a classic result by Edmonds \cite{edmonds2001submodular}, \ref{lp:natural:matroid} can be solved efficiently using the ellipsoid method.
Fix an optimal solution $(x,y)$ to \ref{lp:natural:matroid} in what follows. 
By splitting each facility into many copies if necessary \cite{charikar2012dependent} and always letting $\calF'$ denote the set of facilities after duplication, we assume $x_{ij}\in\{0,y_i\}$ for each $i\in\calF',j\in\calC$.
We maintain another matroid $\calM'$ on $\calF'$, such that for any $S'\subseteq\calF'$ and $S\subseteq\calF$ where $S'$ consists of copies of $S$, we have $r_\calM(S)=r_{\calM'}(S')$. 
Throughout this paper, for each $x\in\R^\calX$ and $S\subseteq\calX$, we write $x(S)=\sum_{i\in S}x_i$.


\subparagraph{Solution-dependent notation.}
Let $\calF_j=\{i\in\calF':x_{ij}>0\}$, and $y(\calF_j)=r$ by definition. 
Since $x_{ij}=y_i$ for each $i\in\calF_j$, the $y$-vector on $\calF'$ and subsets $\{\calF_j\subseteq\calF':j\in\calC\}$ can fully represent the solution $(x,y)$. 
By making co-located copies, there exists a partition $\calF_j=\bigcup_{t\in[r]}\calF_{j,t}$ for each $j$ such that $y(\calF_{j,t})=1$ for each $t\in[r]$, and $\calF_{j,t}$ is the $t$-th nearest unit volume of facilities in $\calF_j$; 
that is, for $t_1<t_2$ and $i_1\in\calF_{j,t_1},i_2\in\calF_{j,t_2}$, one has $d(j,i_1)\leq d(j,i_2)$.
Define $\dav^t(j)=\sum_{i\in\calF_{j,t}}y_id(i,j)$ the average distance from $j$ to $\calF_{j,t}$, and $\dma^t(j)=\max_{i\in\calF_{j,t}}d(i,j)$. 
The following inequalities then easily follow,
\begin{equation}
\dav^1(j)\leq\dma^1(j)\leq\dav^2(j)\leq\cdots\leq\dav^{r}(j)\leq\dma^{r}(j).
\label{equation:chain:of:inequalities}
\end{equation}

We define $\dav(j)=r^{-1}\sum_{i\in\calF_j}y_id(i,j)$, the total opening cost $\ftotal=\sum_{i\in\calF'}f_iy_i$ and write $\dma(j)=\dma^{r}(j)$ as a shorthand term. It is easy to see $r\dav(j)$ is the contribution of $j$ to \ref{lp:natural:matroid}, and $r\dav(j)=\sum_{t=1}^{r}\dav^t(j)$. We reload $\dma(j,S)=\max_{i\in S}d(i,j)$ to represent the maximum distance from $j$ to any $S\subseteq\calF'$. 

For $j\in\calC$ and each $R,\lambda\geq0$, we define the closed balls of facilities $\ball(j,R)=\{i\in\calF':d(j,i)\leq R\}$ and $\ball_\lambda(j)=\ball(j,\lambda\cdot\dma(j))$.
Clearly, since $(x,y)$ is optimal, $\dma(j)$ is the smallest radius $R$ such that $y(\ball(j,R))\geq r$ for each $j\in\calC$.

\section{The iterative rounding framework}\label{section:matroid:uniform}

We introduce an iterative rounding framework with several useful ingredients adapted from \cite{hajiaghayi2016constant}, and devise a constant-factor approximation for \ftm. 
Our main goal is to round an auxiliary LP that has integral vertex solutions.
Previously for fault-tolerant $k$-median, Hajiaghayi \etal\cite{hajiaghayi2016constant} develop some linear constraints to restrict the placement of open facilities.
In particular, they obtain an auxiliary LP defined by two laminar families, which is well-known to be integral since the coefficient matrix is totally unimodular (see, e.g., \cite{schrijver2003combinatorial}).

In \ftm, however, the auxiliary LP is not necessarily integral because of the matroid $\calM$.
We overcome this issue by simplifying the laminar families into two families of disjoint subsets denoted by $\calB$ and $\calU$, and iteratively modifying $\calU$ such that it eventually becomes a ``refinement'' of $\calB$; that is, the two families still contain disjoint subsets, and each $B\in\calB$ becomes a union of some subsets in $\calU$.
As a result, the LP constraints defined by $\calB$ become obsolete and other constraints are defined by $\calM$ and $\calU$ (a partition matroid), hence the auxiliary LP becomes integral (see \cite{edmonds2001submodular}).

\subsection{Construction of the family \texorpdfstring{$\calB$}{}}

In this section, we create a family $\calB$ of disjoint closed balls, each of which is centered at a \emph{dangerous} client, formally defined as follows: Fix $\gamma=3+\delta$ where $\delta>0$, and define the set of dangerous clients as $D=\{j\in\calC:\dma(j)>3\gamma\dav^r(j)\}$; if a client is not dangerous, it is \emph{safe}.

We say that two dangerous clients $j$ and $j'$ are \emph{in conflict} if $d(j,j')\leq6\max\{\dav(j),\dav(j')\}$ and construct a filtered subset $D'\subseteq D$ as follows.
Set $D'\leftarrow\emptyset$ initially. 
In non-decreasing order of $\dav(j)$ for each \emph{unmarked} $j\in D$, $D'\leftarrow D'\cup\{j\}$ and \emph{mark} each $j'\in D$ in conflict with $j$. 
Put $\dmd_j\in\mathbb{Z}_+$ as the number of \emph{newly-marked} clients in this step (including $j$ itself) and consolidate a demand of $\dmd_j$ at $j$. 

Let $\ib_j=\ball_{1/\gamma}(j)=\{i\in\calF':d(i,j)\leq\dma(j)/\gamma\}$ for each $j\in D'$, and $\calB=\{\ib_j:j\in D'\}$.
We note that $\calB$ is a simplified version of the laminar family in \cite{hajiaghayi2016constant} and defined using different radii.
The purpose of the closed balls is to create certain LP constraints (see \eqref{lp:iterative3}) that prioritize opening facilities inside each $B_j\in\calB$, because the $r$-th partition $\calF_{j,r}$ of $j\in D'$ is more ``diffusive'' than safe clients in the sense of $\dma^r(j)>3\gamma\cdot\dav^r(j)$, and bounding the cost of dangerous clients in a rounded integral solution is much more challenging.
Since $\gamma>3$, the following lemma shows that the closed balls in $\calB$ are pair-wise disjoint.

\begin{lemma}\label{lemma:first:separation}
For $j\neq j'$ from $D'$, one has \[d(j,j')\geq\max\{\dma(j),\dma(j')\}-\rg\cdot\min\{\dma(j),\dma(j')\}.\]
\end{lemma}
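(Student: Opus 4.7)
The plan is to combine the triangle inequality with the conflict-free property of the filtered set $D'$. WLOG assume $\dav(j)\le\dav(j')$, so that $j$ was added to $D'$ before $j'$; since $j'$ was then unmarked, the two clients are not in conflict, giving $d(j,j')>6\dav(j')$.

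I would first derive a weak purely metric bound. The ball $\ball(j,\dma(j))$ contains $\calF_j$ and hence has $y$-mass at least $r$, and by the triangle inequality it lies inside $\ball(j',\dma(j)+d(j,j'))$. Since $\dma(j')$ is the smallest radius around $j'$ with $y$-mass at least $r$, this forces $\dma(j)+d(j,j')\ge\dma(j')$, i.e., $d(j,j')\ge\dma(j')-\dma(j)$, and symmetrically $d(j,j')\ge\dma(j)-\dma(j')$.

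To upgrade this to the claimed $\max\{\dma(j),\dma(j')\}-\rg\min\{\dma(j),\dma(j')\}$, I would exploit the dangerous condition. The chain \eqref{equation:chain:of:inequalities} together with $\dma>3\gamma\dav^r$ yields $\dma^t\le\dav^r<\dma/(3\gamma)<\dma/\gamma$ for every $t<r$, so the first $r-1$ partitions satisfy $\calF_{j,t}\subseteq\ib_j$; a Markov estimate on the last partition $\calF_{j,r}$ then gives $y(\ib_j)>r-1/3$, and the same bound holds for $\ib_{j'}$. Assuming WLOG $\dma(j)\le\dma(j')$ and arguing by contradiction, if $d(j,j')<\dma(j')-\dma(j)/\gamma$ then $\ib_j\subseteq\ball(j',d(j,j')+\dma(j)/\gamma)$ with outer radius strictly less than $\dma(j')$, so this outer ball has $y$-mass strictly less than $r$. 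Inclusion--exclusion with $\ib_{j'}$ then forces a large overlap $\ib_j\cap\ib_{j'}$, from which the conflict-free bound $d(j,j')>6\dav(j')$ together with a further Markov concentration of $\calF_{j',r}$ inside $\ball(j',\dma(j'))$ should produce the contradiction.

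The main obstacle is this last quantitative step: the conflict-free bound $6\dav(j')$ is only at most $2\dma(j')/\gamma$ (because $\dav^r(j')<\dma(j')/(3\gamma)$), which on its own falls short of the gap $\dma(j)(\gamma-1)/\gamma$ between the triangle bound and the target bound. Closing that gap will require carefully accounting for the thin sliver of $\calF_{j,r}$-mass living in the annulus $\ball(j,\dma(j))\setminus\ib_j$ and for the $\calF_{j',r}$-mass living in $\ball(j',\dma(j'))\setminus\ib_{j'}$, so that the would-be total mass implied by the triangle inclusion overruns the LP-imposed cap of $r$ on $y(\ball(j',R))$ for any $R<\dma(j')$.
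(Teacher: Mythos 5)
Your ingredients are the right ones and largely coincide with the paper's: the Markov-type bound $y(\ib_j)\geq r-1/3$ from the dangerous condition, the containment of $\ib_j$ in a ball around $j'$ of radius $d(j,j')+\dma(j)/\gamma<\dma(j')$ under the negation, and the inclusion--exclusion giving $y(\ib_j\cap\ib_{j'})\geq r-2/3$ are all correct. The genuine gap is the finishing step, which you explicitly leave open. Your stated obstacle --- that $6\dav(j')\leq 2\dma(j')/\gamma$ ``falls short of the gap'' between the weak triangle bound and the target --- misreads how the contradiction arises: the conflict-free property is not used as a lower bound on $d(j,j')$ to be compared against $\dma(j')-\dma(j)/\gamma$, and no ``sliver accounting'' is needed. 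Instead, once the overlap is large, average over a facility $i$ drawn from $\ib_j\cap\ib_{j'}$ with probability proportional to $y_i$ (note that any facility with $y_i>0$ within distance $\dma(j)/\gamma<\dma(j)$ of $j$ lies in $\calF_j$, and likewise for $j'$), so that
\[
d(j,j')\;\leq\;\E_i[d(i,j)]+\E_i[d(i,j')]\;\leq\;\frac{r\dav(j)}{r-2/3}+\frac{r\dav(j')}{r-2/3}\;\leq\;3\dav(j)+3\dav(j')\;\leq\;6\max\{\dav(j),\dav(j')\},
\]
which directly contradicts the fact that no two members of $D'$ are in conflict (this also makes your first ``WLOG $\dav(j)\leq\dav(j')$'' unnecessary, which is good because it may clash with your later ``WLOG $\dma(j)\leq\dma(j')$'').

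For comparison, the paper reaches the lemma in two stages: it first runs exactly this overlap-plus-averaging argument under the hypothesis $d(j,j')<\dma(j)/3+\dma(j')/3$, concluding $d(j,j')\geq\dma(j)/3+\dma(j')/3$ and hence $\ib_j\cap\ib_{j'}=\emptyset$ (since $1/\gamma<1/3$); it then notes that if the lemma failed (with $\dma(j)\geq\dma(j')$), the ball $\ib_{j'}$ would sit inside $\ball(j,R)$ for some $R<\dma(j)$, so the two disjoint balls would place mass at least $2r-2/3>r$ within radius $R<\dma(j)$ of $j$, contradicting the minimality of $\dma(j)$. Your one-shot variant, applying the overlap argument directly to the negation of the lemma, does work --- but only once the averaging step above is supplied; as written, the proof is incomplete precisely at that point.
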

\begin{proof}

We first claim $y(\ib_j)\in[r-1/3,r)$ for each $j\in D'$ \cite{hajiaghayi2016constant}.
Suppose $y(\ib_j)<r-1/3$, then $y(\calF_{j,r}\setminus \ib_j)>1/3$ and the average distance on $\calF_{j,r}$ is at least $\dav^r(j)\geq\dma(j)/(3\gamma)>\dav^r(j)$ since $j$ is dangerous, which is a contradiction.

Next, we show $d(j,j')\geq\dma(j)/3+\dma(j')/3$. Assume otherwise and w.l.o.g., $\dma(j)\geq\dma(j')$. We thus have $d(j,j')+\dma(j')/\gamma\leq 2\dma(j)/3+\dma(j)/\gamma$, so the closed ball $\ib_{j'}$ satisfies $\ib_{j'}=\ball_{1/\gamma}(j')\subseteq\ball_{2/3+\rg}(j)$ and $y(\ib_j\cup \ib_{j'})\leq y(\ball_{2/3+\rg}(j))<r$ since $2/3+\rg<1$ and $\ball_{2/3+\rg}(j)\subsetneq\calF_j$. This gives
$y(\ib_j\cap \ib_{j'})=y(\ib_j)+y(\ib_{j'})-y(\ib_j\cup \ib_{j'})\geq r-2/3$.

Using the triangle inequality, if we sample $i\in \ib_j\cap \ib_{j'}$ on the normalized distribution defined by $y$, it follows that $d(j,j')\leq\mathbb{E}_{i}[d(j,i)+d(i,j')]=\mathbb{E}_i[d(i,j)]+\mathbb{E}_i[d(i,j')]$. 
Since $y(\ib_j\cap \ib_{j'})\geq r-2/3$, this is at most $r\dav(j)/(r-2/3)+r\dav(j')/(r-2/3)\leq 3\dav(j)+3\dav(j')\leq 6\max\{\dav(j),\dav(j')\}$, contradicting the fact that $j$ and $j'$ are not in conflict.

Now $\ib_j\cap \ib_{j'}=\emptyset$ using $1/\gamma<1/3$. 
Assume the lemma is false and w.l.o.g., $\dma(j)\geq\dma(j')$. 
Let $R=d(j,j')+\dma(j')/\gamma<\dma(j)$, thus $\ib_{j'}\subseteq\ball(j,R)$.
Since $\ib_j\cap \ib_{j'}=\emptyset$, we have $y(\ball(j,R))\geq y(\ib_j)+y(\ib_{j'})\geq 2r-2/3>r$ for $R<\dma(j)$, contradicting the definition of $\dma(j)$.
\end{proof}


\subsection{Construction of the family \texorpdfstring{$\calU$}{}}

We create a disjoint family $\calU$ of subsets of $\calF'$ called \emph{bundles} via the adaptive clustering algorithm by Yan and Chrobak \cite{yan2015lp}, with each of them having volume 1; 
i.e., $\forall U\in\calU$, $y(U)=1$.
We will eventually open exactly one facility in each bundle.
We also assign a set of distinct bundles $\que_j\subseteq\calU$ to each $j\in\calC$, called the \emph{queue} of $j$.
Let $Q_{j,t}$ be the $t$-th bundle added to $\que_j$, $Q_{j,<l}=\bigcup_{t<l}Q_{j,t}$, and $Q_{j,\leq l}=\bigcup_{t\leq l}Q_{j,t}$.
The algorithm is described in \cref{algorithm:bundle}; it is also where our procedure diverges from \cite{hajiaghayi2016constant,yan2015lp}, by creating the bundles and queues using a more fine-tuned approach.
At \cref{line:dangerous2} and \cref{line:safe:creation}, we say that client $j$ is the \emph{creator} of bundle $U$.

Recall our goal is to make $\calU$ into a refinement of $\calB$, thus in the initial construction, we reduce the extent of intersecting between their members (we say $S_1$ and $S_2$ are \emph{intersecting} if $S_1\setminus S_2,S_2\setminus S_1,S_1\cap S_2$ are all non-empty).
Each $j\in D'$ has $|\que_j|=r$, because whenever a bundle is added to $\que_j$ (\cref{line:dangerous1} and \cref{line:dangerous2}), $y(\calF_j')$ which starts from $r$ is decreased by at most 1. 
Each $j''\in D\setminus D'$ has $|\que_{j''}|=0$, which simplifies the structure of $\calU$; there exists $j\in D'$ in conflict with $j''$, which can always provide the assigned $r$ facilities in the final solution.


$\calU_{shell}\subseteq\calU$ is a subset of \emph{shell bundles}, each of which is created by some dangerous $j\in D'$ when it happens to be the $r$-th bundle in $\que_j$ (\cref{line:create:shell}).
As we show later in \cref{lemma:queue:and:bundle}, we want $\calU_{shell}$ to be the \emph{only} bundles that could be intersecting the closed balls in $\calB$; 
to complete the refinement, we will remove some shell bundles during iterative rounding. 
This also explains \cref{line:noalien} and \cref{line:noshell} where one freezes a safe queue early by setting $\calF_j'\leftarrow\emptyset$: 
when a safe client $j$ wants to create a bundle intersecting some $B_{j'}\in\calB$, or when a shell bundle created by $j'\in D'$ is going to be added to a safe queue $\que_{j}$, we freeze $\que_{j}$ and ``borrow'' some facilities (see \cref{lemma:final:solution:safe}) from $j'$ for $j$ in the final solution.
Therefore, each $j\in\calC\setminus D$ has $|\que_{j}|\leq r$.

\begin{algorithm}[hbt!]
\caption{\textsc{ALG-Bundle}}\label{algorithm:bundle}
\DontPrintSemicolon
\SetKwInOut{Input}{Input}
\SetKwInOut{Output}{Output}

\Input{$\calF',\calC,D,D';\,(x,y),\{\calF_j:j\in\calC\},\{\ib_j:j\in D'\}$}
\Output{bundles and subsets of bundles for each client}
\tcc{duplicate facilities in $\calF'$ whenever necessary}
$\calU\leftarrow\emptyset,\,\calU_{shell}\leftarrow\emptyset;\,\forall j\in\calC,\,\calF_j'\leftarrow \calF_j,\,\que_j\leftarrow\emptyset$\;
\While{$\exists j\in(\calC\setminus D)\cup D'$ s.t. $|\que_j|<r$ and $\calF_j'\neq\emptyset$}{
choose such $j$, s.t. for $U$ being the nearest unit volume of facilities in $\calF_j'$ (i.e., $y(U)=1$), $\max_{i\in U}d(i,j)$ is minimized\label{line:propose}\;
\uIf(\tcc*[f]{$j$ is dangerous}){$j\in D'$}{
\lIf{\label{line:dangerous1}$\exists U'\in\calU$ s.t. $U\cap U'\neq\emptyset$}{
$\que_j\leftarrow\que_j\cup\{U'\}$, $\calF_j'\leftarrow\calF_j'\setminus U'$
}
\Else(\label{line:dangerous2}$\que_j\leftarrow\que_j\cup\{U\},\,\calU\leftarrow\calU\cup\{U\}$, $\calF_j'\leftarrow\calF_j'\setminus U$\tcc*[f]{$U$ is created by $j$}){
\lIf{$|\que_j|=r$}{$\calU_{shell}\leftarrow\calU_{shell}\cup\{U\}$\label{line:create:shell}}
}
}
\Else(\tcc*[f]{$j$ is safe}){
\lIf{$\exists j'\in D'$ s.t. $U\cap \ib_{j'}\neq\emptyset,\,U\setminus \ib_{j'}\neq\emptyset$, $U$ disjoint from $\que_{j'}$\label{line:noalien}}{
$\calF_j'\leftarrow\emptyset$}
\lElseIf{$\exists U'\in\calU_{shell}$ s.t. $U\cap U'\neq\emptyset$\label{line:noshell}}{$\calF_j'\leftarrow\emptyset$}
\lElseIf{$\exists U'\in\calU$ s.t. $U\cap U'\neq\emptyset$}{
$\que_j\leftarrow \que_j\cup\{U'\}$, $\calF_j'\leftarrow\calF_j'\setminus U'$
}
\lElse{\label{line:safe:creation}$\que_j\leftarrow \que_j\cup\{U\},\,\calU\leftarrow\calU\cup\{U\}$, $\calF_j'\leftarrow\calF_j'\setminus U$\tcc*[f]{$U$ is created by $j$}}
}
}
\Return $\calU,\,\calU_{shell}\subseteq\calU,\,\{\que_j\subseteq\calU:j\in\calC\}$
\end{algorithm}

The following lemma follows directly from \cite{hajiaghayi2016constant}.

\begin{lemma}\label{lemma:bundle:third:away}\emph{(\cite{hajiaghayi2016constant}).}
$\dma(j,Q_{j,t})\leq3\dma^t(j)$, $\forall j\in\calC$,  $t\in[r]$.
\end{lemma}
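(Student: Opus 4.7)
The plan is to split into cases based on whether $Q_{j,t}$ was freshly created by $j$ at its $t$-th proposal or inherited from an existing bundle. Let $U$ denote the unit volume $j$ proposes at its $t$-th visit to \cref{line:propose}, so that either $Q_{j,t}=U$ (added at \cref{line:dangerous2} or \cref{line:safe:creation}) or $Q_{j,t}=U'$ for some $U'\in\calU$ with $U\cap U'\neq\emptyset$.

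The first step is to show $\max_{i\in U}d(j,i)\leq\dma^t(j)$. I would argue this by a pigeonhole on $\calF_j'$: each earlier proposal by $j$ removes at most one unit of $y$-volume from $\calF_j'$ (either the proposed unit itself, or the intersection with an inherited bundle of total volume $1$), so after $t-1$ proposals $y(\calF_j')\geq r-t+1$. Since $y(\calF_{j,t+1}\cup\cdots\cup\calF_{j,r})=r-t$, this forces $y(\calF_j'\cap(\calF_{j,1}\cup\cdots\cup\calF_{j,t}))\geq 1$; hence $\calF_j'$ contains a unit-volume subset lying entirely in the first $t$ partitions, whose maximum distance to $j$ is at most $\dma^t(j)$. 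The proposed $U$ is selected to minimize this maximum distance, so the bound follows. The case $Q_{j,t}=U$ is then immediate.

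In the inherited case $Q_{j,t}=U'$, pick any $i^*\in U\cap U'$ and let $j''$ be the creator of $U'$, with creator radius $d''=\max_{i\in U'}d(j'',i)$. The key comparison is $d''\leq\dma^t(j)$. When $j''$ was chosen at some earlier global iteration (via \cref{line:propose}), $j$ was still eligible: $|\que_j|$ was less than $t\leq r$, and $\calF_j'$ only shrinks over time so it still contained $U\neq\emptyset$. Moreover, the nearest-unit-volume distance is monotone non-increasing in $\calF_j'$, so $j$'s hypothetical proposal radius at that earlier iteration was at most its current radius $\max_{i\in U}d(j,i)\leq\dma^t(j)$ at step $t$. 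Since \cref{line:propose} picks the globally smallest proposal radius, we obtain $d''\leq\dma^t(j)$. The triangle inequality then gives, for any $i\in U'$,
\begin{equation*}
d(j,i)\leq d(j,i^*)+d(i^*,j'')+d(j'',i)\leq\dma^t(j)+2d''\leq 3\dma^t(j).
\end{equation*}

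The main obstacle I anticipate is the cross-iteration comparison of proposal radii, which requires careful bookkeeping of $\calF_j'$ and the observation that $j$ cannot have been forcibly emptied (via \cref{line:noalien} or \cref{line:noshell} for safe clients) before step $t$ whenever $Q_{j,t}$ exists. The remaining work is routine triangle-inequality manipulation and matches the analogous lemma in \cite{hajiaghayi2016constant}.
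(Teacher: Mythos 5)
The paper does not prove this lemma at all—it imports it verbatim from \cite{hajiaghayi2016constant}—so there is no internal proof to compare against. Your argument is correct and is essentially the standard one from that reference: the volume-counting step (at most $t-1$ units removed from $\calF_j'$ before the $t$-th addition, so the proposal radius is at most $\dma^t(j)$), the monotonicity of proposal radii combined with the global greedy choice at \cref{line:propose} to bound the creator's radius of an inherited bundle, and a triangle inequality through a point of $U\cap U'$, giving $\dma^t(j)+2\dma^t(j)=3\dma^t(j)$.
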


Next, we establish two lemmas on some special properties of bundles and queues, as a result of respecting the dangerous clients in \cref{algorithm:bundle}.
Especially, \cref{lemma:queue:and:bundle} shows that $\calU$ is already close to being a refinement of $\calB$ after \cref{algorithm:bundle}.

\begin{lemma}\label{lemma:noalien:safe}
At the moment when \cref{line:noalien} of \cref{algorithm:bundle} holds, one has $|\que_{j'}|\geq r-1$ and $\dma(j,U)\geq(1-\rg)\dma(j')/2$.
\end{lemma}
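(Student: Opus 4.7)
The plan is to prove both assertions by an interlocking argument combining the priority rule of line~\ref{line:propose}, a volume-theoretic analysis of balls centered at $j'$, and the triangle inequality. The preliminary structural fact, drawn from \cref{lemma:bundle:third:away}, the chain~\eqref{equation:chain:of:inequalities}, and the definition of dangerous clients, is that for every $t \leq r - 1$,
\[
\dma(j', Q_{j', t}) \leq 3 \dma^t(j') \leq 3 \dav^r(j') < \dma(j')/\gamma,
\]
and similarly each partition $\calF_{j', t}$ with $t \leq r - 1$ lies inside $\ib_{j'}$.

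For part~1, I argue by contradiction: assume $|\que_{j'}| \leq r - 2$. Each bundle added to $\que_{j'}$ decreases $y(\calF_{j'}')$ by at most $1$, so $y(\calF_{j'}') \geq 2$ and $j'$ is still active. A volume count yields $y(\calF_{j'}' \cap \bigcup_{t=1}^{r-1} \calF_{j', t}) \geq (r-1) - (r-2) = 1$, so the nearest unit volume $U^*$ for $j'$ lies inside $\bigcup_{t=1}^{r-1} \calF_{j', t}$ with $\dma(j', U^*) \leq \dma^{r-1}(j') < \dma(j')/(3\gamma)$. The priority rule of line~\ref{line:propose} then forces $\dma(j, U) \leq \dma(j', U^*) < \dma(j')/(3\gamma)$. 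Once the bound of part~2 is in hand this is a contradiction, since $(1 - \rg)/2 > 1/(3\gamma)$ for every $\gamma > 3$.

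For part~2, using $|\que_{j'}| \geq r - 1$ from part~1, the first $r - 1$ bundles of $\que_{j'}$ lie in $\ib_{j'} \subseteq \ball(j', R')$ for every $R' \geq \dma(j')/\gamma$, so $y(\bigcup \que_{j'} \cap \ball(j', R')) \geq r - 1$. The defining property of $\dma(j')$ gives $y(\ball(j', R')) < r$ for every $R' < \dma(j')$, whence $y(\ball(j', R') \setminus \bigcup \que_{j'}) < 1$. Since $y(U) = 1$ and $U$ is disjoint from $\bigcup \que_{j'}$, $U$ cannot lie inside $\ball(j', R')$, so some $i^{**} \in U$ satisfies $d(i^{**}, j') \geq R'$. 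Letting $R' \nearrow \dma(j')$ produces $i^{**} \in U$ with $d(i^{**}, j') \geq \dma(j')$. Picking any $i \in U \cap \ib_{j'}$ (so $d(i, j') \leq \dma(j')/\gamma$) and applying the triangle inequality twice gives
\[
2 \dma(j, U) \geq d(j, i) + d(j, i^{**}) \geq d(i, i^{**}) \geq d(i^{**}, j') - d(i, j') \geq (1 - \rg) \dma(j').
\]

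The main obstacle is the apparent circular dependence between the two parts, since part~1's contradiction invokes part~2 while part~2 uses part~1's queue-size input. I plan to resolve this by running both deductions together in a single joint proof-by-contradiction, so that the priority bound $\dma(j, U) < \dma(j')/(3\gamma)$ derived from assuming $|\que_{j'}| \leq r - 2$ directly clashes with the volume-plus-triangle lower bound; the delicate step is to verify that the volume argument remains tight enough in this boundary regime to produce the clash without the full strength of part~1 as a hypothesis.
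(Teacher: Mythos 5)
Your Part~2 argument is correct and is essentially the paper's: once $|\que_{j'}|\geq r-1$ is known, you pack $Q_{j',<r}$ and $U$ into a ball centered at $j'$ of radius strictly less than $\dma(j')$ and get a volume contradiction; your ``letting $R'\nearrow\dma(j')$'' phrasing is just the contrapositive of the paper's direct derivation. However, your Part~1 has a genuine gap, and it is exactly where you flag uncertainty. You assume $|\que_{j'}|\leq r-2$, derive $\dma(j,U)<\dma(j')/(3\gamma)$ from the priority rule, and then want to clash this against the lower bound of Part~2. But Part~2's volume argument needs $r-1$ units of $y$-mass from $\bigcup\que_{j'}$ sitting inside $\ib_{j'}$, which is precisely what fails when $|\que_{j'}|\leq r-2$: now $y(\bigcup\que_{j'})\leq r-2$, so $y(\ball(j',R')\setminus\bigcup\que_{j'})<2$ and you cannot conclude $U\not\subseteq\ball(j',R')$. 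You also cannot substitute $\ib_{j'}$ for $\bigcup\que_{j'}$ in that argument, because $U$ is \emph{not} disjoint from $\ib_{j'}$ (the condition at \cref{line:noalien} requires $U\cap\ib_{j'}\neq\emptyset$). So the two parts remain circular, and ``running both deductions together'' does not resolve it.

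The missing idea, which the paper supplies, is a self-contained volume contradiction for Part~1 that does not reference the queue at all. From $\dma(j,U)\leq\dma^{r-1}(j')$, one shows $U$ is disjoint from $\ball(j',\dma^{r-1}(j'))$: if some $i''\in U$ had $d(j',i'')\leq\dma^{r-1}(j')$, then for every $i\in U$ the triangle inequality gives $d(j',i)\leq\dma^{r-1}(j')+2\dma(j,U)\leq3\dma^{r-1}(j')\leq3\dav^r(j')<\dma(j')/\gamma$, so $U\subseteq\ib_{j'}$, contradicting $U\setminus\ib_{j'}\neq\emptyset$. Now $\ball(j',\dma^{r-1}(j'))$ has $y$-volume at least $r-1$ (it contains $\calF_{j',1}\cup\cdots\cup\calF_{j',r-1}$), and $U\cap\ib_{j'}\neq\emptyset$ together with $\dma(j,U)\leq\dma^{r-1}(j')$ gives $U\subseteq\ball(j',\dma(j')/\gamma+2\dma^{r-1}(j'))\subsetneq\ball(j',\dma(j'))$. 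The disjoint union $\ball(j',\dma^{r-1}(j'))\cup U$ then has $y$-volume at least $r$ inside a ball of radius strictly less than $\dma(j')$, contradicting the minimality of $\dma(j')$. Notice that this uses $U\setminus\ib_{j'}\neq\emptyset$ in an essential way, which your Part~1 never exploits; that is the hypothesis you left on the table.
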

\begin{proof}
We restate the condition for convenience: A safe client $j$ proposes a candidate bundle $U\subseteq\calF_j'$. There exists $j'\in D'$ s.t. $U\cap \ib_{j'}\neq\emptyset,\,U\setminus \ib_{j'}\neq\emptyset$ and $U$ is disjoint from $\que_{j'}$.

Assume otherwise and $|\que_{j'}|\leq r-2$, thus if $j'$ were to propose its candidate bundle $U'$ in this iteration, it would have $\dma(j',U')\leq\dma^{r-1}(j')$ since at most $(r-2)$ volume of facilities is removed from $\calF'_{j'}$. 
Thus $\dma(j,U)\leq\dma^{r-1}(j')$ since $U$ is proposed before $U'$. 
We claim $\min_{i\in U}d(i,j')>\dma^{r-1}(j')$, otherwise $\dma(j',U)\leq\dma^{r-1}(j')+2\dma(j,U)\leq3\dma^{r-1}(j')\leq3\dav^r(j')<\dma(j')/\gamma$ using \cref{equation:chain:of:inequalities}, hence $U\subseteq B_{j'}$ which contradicts $U\setminus \ib_{j'}\neq\emptyset$.
This implies $\ball(j',\dma^{r-1}(j'))\cap U=\emptyset$.

Since $U\cap\ib_{j'}\neq\emptyset$, we have $\dma(j',U)\leq\dma(j')/\gamma+2\dma^{r-1}(j')\leq\dma(j')/\gamma+2\dav^r(j')<\dma(j')(1/\gamma+2/(3\gamma))<0.9\dma(j')$ using the triangle inequality.
Thus, one further has $U\subseteq\ball_{0.9}(j')$ and $y(\ball_{0.9}(j'))\geq y(\ball(j',\dma^{r-1}(j')))+y(U)\geq r$ because $U$ and $\ball(j',\dma^{r-1}(j'))$ are disjoint due to the analysis above.
This contradicts the definition of $\dma(j')$.

Therefore, we must have $|\que_{j'}|\geq r-1$.
For the sake of contradiction, we assume $\dma(j,U)<(1-\rg)\dma(j')/2$. 
Therefore, $\dma(j',U)\leq \dma(j')/\gamma+2\dma(j,U)<\dma(j')$ using the triangle inequality, $\dma(j',Q_{j',<r})\leq3\dma^{r-1}(j')\leq3\dav^r(j')<\dma(j')$ using \cref{lemma:bundle:third:away}, and $U$ is disjoint from $\que_{j'}$ by the condition.
As a result, there is at least $r$ volume of facilities $Q_{j',<r}\cup U$ inside a ball centered at $j'$ with a radius smaller than $\dma(j')$, contradicting the definition of $\dma(j')$ again.
\end{proof}

\begin{lemma}\label{lemma:queue:and:bundle}
(i) For $j'\in D'$, one has $Q_{j',<r}\subseteq \ib_{j'}$.
(ii) For $U\in\calU$ and $j'\in D'$, $U\subseteq\ib_{j'}$ if and only if $U\subseteq Q_{j',<r}$.
(iii) For $j\in\calC\setminus D$ and $j'\in D'$, each $Q_{j,t}$ satisfies either $Q_{j,t}\cap \ib_{j'}=\emptyset$ or $Q_{j,t}\subseteq\ib_{j'}$.
\end{lemma}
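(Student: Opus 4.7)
My plan is to prove the three parts in order, leveraging two background facts: first, the bundles in $\calU$ are pairwise disjoint by construction (when any creation step fires, none of the earlier checks — including the ``intersects some $U'\in\calU$'' branch — triggered); second, for $j'\in D'$ the closed ball $B_{j'}$ satisfies $y(B_{j'})\in[r-1/3, r)$, as already shown inside the proof of \cref{lemma:first:separation}.

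For part (i), I would use \cref{lemma:bundle:third:away} together with the chain \eqref{equation:chain:of:inequalities} and the definition of $D$. Specifically, for every $t\le r-1$ we get $\dma(j',Q_{j',t})\le 3\dma^t(j')\le 3\dma^{r-1}(j')\le 3\dav^r(j')$, and because $j'\in D'\subseteq D$ is dangerous, $3\dav^r(j')<\dma(j')/\gamma$. Taking the maximum over $t<r$ gives $Q_{j',<r}\subseteq \ball_{1/\gamma}(j')=\ib_{j'}$.

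For part (ii), the ``if'' direction is immediate from (i). For the ``only if'', suppose some $U\in\calU$ satisfies $U\subseteq \ib_{j'}$ but $U\not\subseteq Q_{j',<r}$. Pairwise disjointness of $\calU$ then forces $U$ to be disjoint from every $Q_{j',t}$ with $t<r$, so $y(\ib_{j'})\ge y(Q_{j',<r})+y(U)=(r-1)+1=r$, contradicting the bound $y(\ib_{j'})<r$ recalled above. Hence $U$ must coincide with one of the bundles comprising $Q_{j',<r}$.

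Part (iii) is the main obstacle, since it requires tracing how $Q_{j,t}$ ended up in a safe queue. I would do a case analysis on the creator of the bundle $U:=Q_{j,t}$. If the creator is some $c\in D'$, then $U\in\que_c$; by (i) either $U\subseteq \ib_c$ (giving $U\cap\ib_{j'}=\emptyset$ when $c\neq j'$ by the disjointness of $\calB$ from \cref{lemma:first:separation}, and $U\subseteq \ib_{j'}$ when $c=j'$), or $U$ is the shell bundle $Q_{c,r}$, which is ruled out because \cref{line:noshell} would have frozen $\que_j$ before $U$ could enter it. If the creator is safe (possibly $j$ itself), then at the moment of creation steps \cref{line:noalien}--``intersects $U'\in\calU$'' all failed; since step 3 failed, $U$ was disjoint from every existing bundle, and in particular from $\que_{j'}$, so the failure of \cref{line:noalien} for this $j'$ forces either $U\cap \ib_{j'}=\emptyset$ or $U\subseteq \ib_{j'}$. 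The one remaining subtlety is that $U$ might have entered $\que_j$ in the ``intersects existing $U'$'' branch of the safe case rather than at \cref{line:safe:creation}; but then the bundle actually added is that pre-existing $U'$, and the same case analysis applies to $U'$'s creator. This completes the trichotomy and yields no straddling.
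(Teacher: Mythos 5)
Your proof is correct and follows essentially the same route as the paper's: part (i) via \cref{lemma:bundle:third:away} and the dangerous-client threshold, part (ii) via bundle disjointness together with $y(\ib_{j'})<r$, and part (iii) via a case analysis on the creator of $Q_{j,t}$, using \cref{line:noshell} to exclude shell bundles and the failure of \cref{line:noalien} (plus disjointness of $\calB$ from \cref{lemma:first:separation}) to exclude straddling. Your explicit handling of the timing details (disjointness from $\que_{j'}$ at creation time, and bundles entering $\que_j$ via the ``intersects existing bundle'' branch) is a slightly more careful spelling-out of what the paper leaves implicit, but not a different argument.
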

\begin{proof}
For each $j'\in D'$ and $t<r$, using \cref{lemma:bundle:third:away} we have $\dma(j',Q_{j',t})\leq3\dma^{t}(j')\leq3\dav^{t+1}(j')<\dma(j')/\gamma$, thus (i) follows.
For (ii), the if-direction follows from (i); for the only-if-direction, since $y(\ib_{j'})\in[r-1/3,r)$ and the bundles are mutually disjoint, $U$ has to be one of the bundles in $Q_{j',<r}$, otherwise there would be at least $r$ volume of facilities $Q_{j',<r}\cup U$ in $\ib_{j'}$, which contradicts $y(\ib_{j'})<r$.

For (iii), assume otherwise and $Q_{j,t}\cap \ib_{j'}\neq\emptyset,\,Q_{j,t}\nsubseteq \ib_{j'}$. 
First, $Q_{j,t}$ is not created by any dangerous $j''\in D'$, since such a bundle is either completely inside $\ib_{j''}$ (the first $(r-1)$ bundles in $\que_{j''}$) or a shell bundle (the last bundle in $\que_{j''}$), which cannot be in the safe queue $\que_{j}$ due to \cref{line:noshell}. 
Thus, $Q_{j,t}$ is created by a safe client. But such a bundle would satisfy \cref{line:noalien} when it is created and should be discarded, a contradiction.
\end{proof}


\subsection{The auxiliary LP and iterative rounding}

We use an auxiliary LP and iterative rounding (\cref{algorithm:iterative}) to obtain an integral solution $z^\star\in\{0,1\}^{\calF'}$, where $z_i\in[0,1]$ is the extent we open $i\in\calF'$.
We maintain two subsets $D_0,D_1\subseteq D'$, which are initially empty. 
Each $j\in D_0$ will have exactly $(r-1)$ bundles constituting $\ib_j$, and each $j\in D_1$ will have $r$ ones constituting $\ib_j$.
We partition $D'=D_0\cup D_1$ in the end by adding each $j$ to $D_0$ or $D_1$. 
Recall when creating $D'\subseteq D$, $\dmd_j\in\mathbb{Z}_+$ is the number of dangerous clients newly-marked by and in conflict with $j\in D'$.

\begin{alignat}{1}
    \text{min\;}&\sum_{j\in D'\setminus(D_0\cup D_1)}\dmd_j\left(\left(\sum_{i\in \ib_j}z_i d(i,j)\right) +\frac{\dma(j)}{\gamma}(r-z(\ib_j)) \right)+\notag\\
    &\sum_{j\in D_0}\dmd_j\sum_{i\in Q_{j,<r}}z_i d(i,j)+
    \sum_{j\in D_1}\dmd_j\sum_{i\in Q_{j,\leq r}}z_i d(i,j)+\sum_{i\in\calF'}f_iz_i
    \tag{$\operatorname{M-IR}$}\label{lp:iterative}\\
    \text{s.t.\;} & z(U)=1\quad\forall U\in\calU\tag{$\operatorname{M-IR}$.1}\label{lp:iterative1}\\
    & z(S)\leq r_{\calM'}(S)\quad\forall S\subseteq\calF'\tag{$\operatorname{M-IR}$.2}\label{lp:iterative2}\\
    & z(\ib_j)\in [r-1,r]\quad\forall j\in D'\setminus(D_0\cup D_1)\tag{$\operatorname{M-IR}$.3}\label{lp:iterative3}\\
    & z_i\in[0,1]\quad\forall i\in\calF'.\tag{$\operatorname{M-IR}$.4}\label{lp:iterative4}
\end{alignat}

Before iterative rounding, $D_0\cup D_1$ is empty, therefore \ref{lp:iterative} only consists of the sum over $D'$ and fractional facility opening costs. 
We show its optimal objective is a good estimate of \ref{lp:natural:matroid}, \emph{excluding} the contributions of safe clients.

\begin{lemma}\label{lemma:lp:objective}
Before iterative rounding, \ref{lp:iterative} has optimum at most $\ftotal+\sum_{k\in D}r\dav(k)$.
\end{lemma}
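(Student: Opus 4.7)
The plan is to exhibit $z := y$ as a feasible primal solution to \ref{lp:iterative} (with $D_0 = D_1 = \emptyset$) and bound its objective directly. First I would verify the four constraint families: \eqref{lp:iterative1} holds because every bundle was constructed to have $y(U)=1$ in \cref{algorithm:bundle}; \eqref{lp:iterative2} holds because the matroid constraint in \ref{lp:natural:matroid} carries over to $\calM'$, which by construction agrees with $\calM$ on corresponding subsets; \eqref{lp:iterative3} is the bound $y(\ib_j)\in[r-1/3,\,r)\subseteq[r-1,\,r]$ already established inside the proof of \cref{lemma:first:separation}; and \eqref{lp:iterative4} is inherited from $y\in[0,1]^{\calF'}$.

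Next I would compute the objective of \ref{lp:iterative} at $z=y$. The facility term $\sum_i f_i y_i$ equals $\ftotal$. For each $j\in D'$, the service term is $\dmd_j\cdot\bigl(\sum_{i\in\ib_j}y_i\,d(i,j)+\tfrac{\dma(j)}{\gamma}(r-y(\ib_j))\bigr)$, and I would bound this by $\dmd_j\cdot r\dav(j)$ as follows. Since $\dma(j)$ is the smallest radius whose $y$-mass is at least $r$, and $\dma(j)/\gamma<\dma(j)$, LP optimality forces every $i\in\ib_j$ with $y_i>0$ to lie in $\calF_j$ (otherwise one could reduce cost by shifting $j$'s assignment to $i$). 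Consequently $r-y(\ib_j)=y(\calF_j\setminus\ib_j)$, and since every $i\in\calF_j\setminus\ib_j$ satisfies $d(i,j)\geq\dma(j)/\gamma$, the radius $\dma(j)/\gamma$ is dominated term-by-term by the actual distances, rewriting the bracket as at most $\sum_{i\in\calF_j}y_i\,d(i,j)=r\dav(j)$.

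Finally I would translate $\sum_{j\in D'}\dmd_j\cdot r\dav(j)$ into $\sum_{k\in D}r\dav(k)$. Recall that $D'$ is constructed by scanning $D$ in non-decreasing $\dav$-order, with $\dmd_j$ counting the clients newly marked at step $j$ (including $j$ itself); the resulting sets $M_j$ partition $D$. For every $k\in M_j$, the scan order gives $\dav(j)\leq\dav(k)$, so $\dmd_j\dav(j)\leq\sum_{k\in M_j}\dav(k)$; summing over $j\in D'$ yields the inequality, and combining with the facility term gives the claimed bound $\ftotal+\sum_{k\in D}r\dav(k)$.

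The hard part will be the per-client service bound for $j\in D'$: it hinges on the (implicit) preprocessing guarantee that $\calF_j$ consists of the nearest $r$ units of $y$-mass to $j$, so $\ib_j$ sits inside $\calF_j$ up to zero-mass facilities, and the ``filler'' $\tfrac{\dma(j)}{\gamma}(r-y(\ib_j))$ undercharges the true LP cost outside $\ib_j$. The remaining steps are routine bookkeeping on bundle volumes and on the $\dav$-ordered filtering that produced $D'$.
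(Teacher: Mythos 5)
Your proposal is correct and follows essentially the same route as the paper: plug in $z=y$ (with $D_0=D_1=\emptyset$), check feasibility of \eqref{lp:iterative1}--\eqref{lp:iterative4}, bound each $j\in D'$'s bracket by $r\dav(j)$ using that facilities of $\calF_j\setminus\ib_j$ lie at distance at least $\dma(j)/\gamma$, and then charge $\dmd_j\, r\dav(j)$ to the marked clients via the non-decreasing $\dav$ scan order. Your explicit handling of why $\operatorname{supp}(y)\cap\ib_j\subseteq\calF_j$ (so that $r-y(\ib_j)=y(\calF_j\setminus\ib_j)$) is a point the paper leaves implicit, but it is the same argument, not a different one.
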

\begin{proof}
Consider the initial solution $y$, and it satisfies all constraints by definition of \ref{lp:natural:matroid}. For each $j\in D'$, the contribution of $j$ in \ref{lp:natural:matroid} is the same as \ref{lp:iterative} for facilities in $\ib_j$, and at least $\rg\dma(j)$ for each facility in $\calF_j\setminus \ib_j$ since $\ib_j=\ball_{1/\gamma}(j)$, thus the contribution of $j$ to \ref{lp:iterative} is at most $\sum_{i\in\calF_j}y_id(i,j)=r\dav(j)$. For $k\in D\setminus D'$ marked by $j\in D'$, we have $\dav(k)\geq\dav(j)$. The lemma then follows by noticing $D_0\cup D_1=\emptyset$ in the beginning and taking the sum over $D$.
\end{proof}

The constraints in \ref{lp:iterative} are defined by two disjoint families $\calU,\,\calB$, and a matroid $\calM'$.
At a high level, \cref{algorithm:iterative} does the following.
When a constraint in \eqref{lp:iterative3} is tight with $z(\ib_j)=r$, we add $j$ to $D_1$ and this constraint is removed from \eqref{lp:iterative3}; 
when a constraint in \eqref{lp:iterative3} is tight with $z(\ib_j)=r-1$, we add $j$ to $D_0$ and this constraint is also removed;
when all constraints in \eqref{lp:iterative3} are removed, the resulting LP represents the intersection of a partition matroid $\calU$ (\eqref{lp:iterative1}) and the matroid $\calM'$ (\eqref{lp:iterative2}), thus has integral vertex solutions (see, e.g., \cite{edmonds2001submodular}); 
\cref{algorithm:iterative} outputs an integral optimal solution $z^\star$. 
We define $\hat F=\{i:i\in\calF',z^\star_i=1\}$ as the final solution, which concludes our algorithm.

\begin{algorithm}[hbt!]
\caption{\textsc{ALG-Iterative}}\label{algorithm:iterative}
\DontPrintSemicolon
\SetKwInOut{Input}{Input}
\SetKwInOut{Output}{Output}

\Input{$D',\calU,\calU_{shell}\subseteq\calU;\,\{\que_j\subseteq\calU:j\in\calC\}$}
\Output{an integral solution to \ref{lp:iterative}, a partition of $D'$}
$D_0\leftarrow\emptyset,\,D_1\leftarrow\emptyset$\;
\While{true}{
find an optimal vertex solution $z$ to \ref{lp:iterative}\;
\For($\calF'\leftarrow\calF'\setminus\{i\}$){every $i\in\calF'$ s.t. $z_i=0$}{
\lFor{$U\in\calU$}{$U\leftarrow U\setminus\{i\}$}
}
\uIf{there exists $j\in D'\setminus(D_0\cup D_1)$ s.t. $z(\ib_j)=r$}{
$D_1\leftarrow D_1\cup\{j\}$,
$U\leftarrow\ib_j\setminus Q_{j,<r}$, $Q_{j,r}\leftarrow U$, $\calU'\leftarrow\{U'\in\calU:U'\cap U\neq\emptyset\}$,
$\calU\leftarrow(\calU\setminus\calU')\cup\{U\}$\;\label{line:delete:shells}
\lFor{$j'\in D',\,Q_{j',r}\in\calU'$}
{$Q_{j',r}\leftarrow U$}}
\lElseIf{there exists $j\in D'\setminus(D_0\cup D_1)$ s.t. $z(\ib_j)=r-1$}{$D_0\leftarrow D_0\cup\{j\}$}
\lElse{break}
}
\Return $z^\star=z,\,D_0,\,D_1$
\end{algorithm}


\subsection{Analysis}

The following lemma gives a lower bound on the \emph{total loss} of the objective in \ref{lp:iterative} during \cref{algorithm:iterative}, since in each iteration we are also minimizing the objective by finding $z$.

\begin{lemma}\label{lemma:solution:transition}
After adding $j\in D'$ to $D_1$, \ref{lp:iterative} has the same objective w.r.t. the current $z$. 
After adding $j\in D'$ to $D_0$, \ref{lp:iterative} has the objective decreased by $\rg\dmd_j\dma(j)$ w.r.t. the current $z$.
\end{lemma}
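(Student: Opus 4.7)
The plan is a direct case analysis: in the iteration when $j$ is moved to $D_0$ or $D_1$, the vector $z$ itself is unchanged (only the partition $D_0,D_1,D'\setminus(D_0\cup D_1)$ and the bookkeeping family $\calU$ are updated), so it suffices to compare the objective of \ref{lp:iterative} evaluated at the same $z$ before and after the update. I would split this into (i) the change in the contribution of $j$ itself, which accounts for the claimed difference, and (ii) the observation that no other term moves.

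For (ii), the facility-cost term $\sum_i f_i z_i$ is manifestly invariant. Contributions from $j''\in D_0$ and from $j''\in D'\setminus(D_0\cup D_1)\setminus\{j\}$ do not reference $Q_{j'',r}$, so the rewrite at \cref{line:delete:shells} cannot affect them. For $j''\in D_1$, the relevant $Q_{j'',r}$ was set, when $j''$ was added to $D_1$, to $\ib_{j''}\setminus Q_{j'',<r}\subseteq \ib_{j''}$, and facilities are only ever deleted from bundles, so $Q_{j'',r}\subseteq \ib_{j''}$ still holds; by \cref{lemma:first:separation} the balls $\ib_j$ and $\ib_{j''}$ are disjoint, hence $Q_{j'',r}$ cannot meet $U=\ib_j\setminus Q_{j,<r}\subseteq\ib_j$, so $Q_{j'',r}\notin\calU'$ and is not rewired.

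For (i) I use \cref{lemma:queue:and:bundle}(i), which gives $Q_{j,<r}\subseteq\ib_j$, together with \eqref{lp:iterative1} applied to each bundle in $Q_{j,<r}$ to conclude $z(Q_{j,<r})=r-1$. In the $D_1$-case, $z(\ib_j)=r$ forces $z(\ib_j\setminus Q_{j,<r})=1$, justifying the legal setting $Q_{j,r}=\ib_j\setminus Q_{j,<r}$, so $Q_{j,\leq r}=\ib_j$; the pre-update contribution of $j$ is $\dmd_j\sum_{i\in\ib_j}z_i d(i,j)$ (the penalty $\tfrac{\dma(j)}{\gamma}(r-z(\ib_j))$ vanishes) and the post-update contribution is $\dmd_j\sum_{i\in Q_{j,\leq r}}z_i d(i,j)$, which are equal. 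In the $D_0$-case, $z(\ib_j)=r-1=z(Q_{j,<r})$ forces $z$ to vanish on $\ib_j\setminus Q_{j,<r}$, so $\sum_{i\in\ib_j}z_id(i,j)=\sum_{i\in Q_{j,<r}}z_id(i,j)$; the pre-update contribution is thus $\dmd_j\sum_{i\in Q_{j,<r}}z_id(i,j)+\rg\dmd_j\dma(j)$ while the post-update contribution is $\dmd_j\sum_{i\in Q_{j,<r}}z_id(i,j)$, yielding an exact decrease of $\rg\dmd_j\dma(j)$.

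The only mildly delicate step is part (ii): one must be careful that the bundle consolidation at \cref{line:delete:shells} does not silently alter the objective for some already-processed $j''\in D_1$. This is what the disjointness of $\ib_j$ and $\ib_{j''}$ together with $Q_{j'',r}\subseteq\ib_{j''}$ is really used for; the rest of the argument is an almost mechanical substitution using $z(U')=1$ on bundles and the containment $Q_{j,<r}\subseteq\ib_j$.
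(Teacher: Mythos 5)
Your proof is correct and follows essentially the same route as the paper's: a case analysis of $j$'s own contribution using $Q_{j,<r}\subseteq \ib_j$ together with \eqref{lp:iterative1}, plus the disjointness of the balls from \cref{lemma:first:separation} to show that the consolidation at \cref{line:delete:shells} cannot rewire $Q_{j'',r}$ for any other $j''\in D_1$. The only cosmetic difference is that for clients in $D_0$ and still-undecided clients you observe directly that their objective terms never reference $Q_{\cdot,r}$, whereas the paper argues via disjointness that their bundles avoid $\calU'$.
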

\begin{proof}
Suppose we add $j$ to $D_1$, then we consider the following cases. 
For $j'\in D_0$, its contribution to \ref{lp:iterative} only depends on $Q_{j',<r}$, which is a subset of $\ib_{j'}$ and disjoint from $\ib_j$, using \cref{lemma:first:separation} and \cref{lemma:queue:and:bundle}. Therefore, none of the $(r-1)$ bundles in $\{Q_{j',1},\dots,Q_{j',r-1}\}$ is in $\calU'$ (the removed bundles) and the contribution is the same. 
For another $j'\neq j,\,j'\in D_1$, its contribution depends on $Q_{j',\leq r}$. 
The first $(r-1)$ bundles therein are inside $\ib_{j'}$ thus unaffected, and the last bundle $Q_{j',r}$ is also a subset of $\ib_{j'}$ hence not in $\calU'$, by considering the iteration in which we add $j'$ to $D_1$ and set $Q_{j',r}=\ib_{j'}\setminus Q_{j',<r}$ at an earlier time. 
For $j$ itself, we have $z(\ib_j)=r$ at the start of the iteration and $Q_{j,\leq r}=\ib_j$ at the end of the iteration by definition, thus the contribution of $j$ also does not change w.r.t. the current solution $z$. 
Finally, for $j'\neq j,\,j'\in D'\setminus(D_0\cup D_1)$, its contribution to the objective only depends on $z$ and $\ib_{j'}$, thus stays unchanged.

Suppose we add $j$ to $D_0$, then none of the bundles is removed, therefore the contribution of each $j'\in D'\setminus\{j\}$ does not change. 
We have $z(\ib_j)=r-1$ and thus $Q_{j,<r}=\ib_j$, since $Q_{j,<r}\subseteq\ib_j$ and $z(Q_{j,<r})=\sum_{t=1}^{r-1}z(Q_{j,t})=r-1$ according to \eqref{lp:iterative1}. The second assertion thus follows.
\end{proof}

\begin{lemma}\label{lemma:integral:after:rounding}
When \cref{algorithm:iterative} ends, the output solution $z^\star$ is integral and $D'=D_0\cup D_1$.
\end{lemma}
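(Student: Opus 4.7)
The plan is to show that when the outer while-loop of \cref{algorithm:iterative} exits, $z^\star$ is integral by a matroid-intersection argument, and to derive $D'=D_0\cup D_1$ as a consequence. First I would check termination: every non-breaking iteration strictly enlarges $D_0\cup D_1\subseteq D'$, and once $j$ enters $D_0\cup D_1$ its \eqref{lp:iterative3} constraint is removed and $j$ is never revisited, so the loop runs for at most $|D'|+1$ iterations.

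Next I would analyze $z$ in the iteration where the loop actually breaks. By the break condition, no $j\in D'\setminus(D_0\cup D_1)$ satisfies $z(\ib_j)\in\{r-1,r\}$; combined with \eqref{lp:iterative3} this forces $z(\ib_j)\in(r-1,r)$ strictly for every such $j$. Hence all remaining \eqref{lp:iterative3} constraints are slack at $z$, and $z$ is in fact a vertex of the smaller polytope $P$ defined only by \eqref{lp:iterative1}, \eqref{lp:iterative2}, and \eqref{lp:iterative4}.

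The core step is to show $P$ has only integral vertices. First I would verify that $\calU$ remains a disjoint family throughout: the initial construction in \cref{algorithm:bundle} produces disjoint bundles, and the update in \cref{line:delete:shells} replaces a subfamily of bundles that intersect the newly formed $U=\ib_j\setminus Q_{j,<r}$ by the single set $U$, which is disjoint from the remaining bundles by the definition of $\calU'$. Taking $\{U:U\in\calU\}$ together with singletons $\{i\}$ for $i\in\calF'\setminus\bigcup_{U\in\calU}U$ as the parts gives a partition matroid $N$ on $\calF'$, and \eqref{lp:iterative1} and \eqref{lp:iterative4} cut out the face $\{z\in[0,1]^{\calF'}:z(U)=1\ \forall U\in\calU\}$ of the independence polytope of $N$. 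Intersecting with the independence polytope of $\calM'$ given by \eqref{lp:iterative2} exhibits $P$ as a face of the matroid-intersection polytope of $N$ and $\calM'$, which by Edmonds~\cite{edmonds2001submodular} has only integral vertices; since any face of an integral polytope is integral, $z$ is integral.

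Finally, integrality of $z$ forces $z(\ib_j)\in\mathbb{Z}$ for every $j\in D'\setminus(D_0\cup D_1)$, which is incompatible with $z(\ib_j)\in(r-1,r)$ unless $D'\setminus(D_0\cup D_1)=\emptyset$. This yields $D'=D_0\cup D_1$, while $z^\star=z$ is integral by the previous paragraph. The main obstacle I anticipate is the clean verification that $P$ really is a face of a matroid-intersection polytope, specifically that the equality form of \eqref{lp:iterative1} and the facilities outside every bundle are handled correctly, and that the partition matroid structure of $\calU$ genuinely survives the dynamic rewriting of bundles in \cref{line:delete:shells}.
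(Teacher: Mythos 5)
Your argument is correct and follows essentially the same route as the paper's proof: at the break point none of the remaining constraints \eqref{lp:iterative3} is tight, so the vertex $z$ lies on (a face of) the intersection polytope of the partition matroid induced by $\calU$ and $\calM'$ and is therefore integral, which is incompatible with $z(\ib_j)\in(r-1,r)$ unless $D'\setminus(D_0\cup D_1)=\emptyset$. Your added checks (termination, disjointness of $\calU$ after \cref{line:delete:shells}, and the face-of-matroid-intersection verification) merely spell out details the paper leaves implicit.
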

\begin{proof}
Assume the algorithm ends with $|D'\setminus(D_0\cup D_1)|\geq1$ but none of the corresponding constraints in \eqref{lp:iterative3} is tight. The other tight constraints then belong to the intersection polytope of $\calU$ (a partition matroid) and $\calM'$, thus the output solution is integral  (see, e.g., \cite{schrijver2003combinatorial}).
This in turn forces every constraint left in \eqref{lp:iterative3} to be tight, which is a contradiction.
\end{proof}

\begin{figure}[t]
    \centering
    \includegraphics[width=0.8\textwidth]{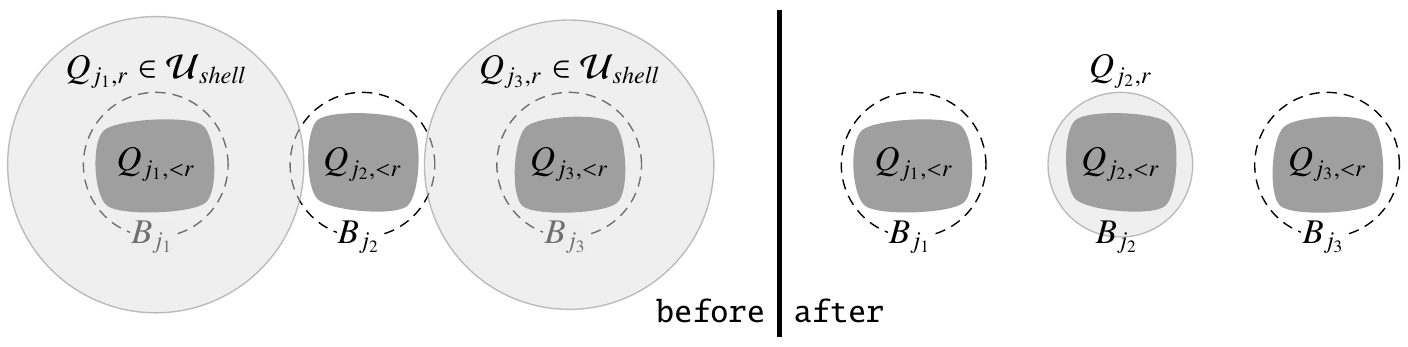}
    \caption{An illustration when $z(\ib_{j_2})=r$ becomes tight at \eqref{lp:iterative3}, and shell bundles $Q_{j_1,r},Q_{j_3,r}$ are removed due to $\calU'$ at \cref{line:delete:shells}. This also creates a new bundle $Q_{j_2,r}\subseteq\ib_{j_2}$, such that $\ib_{j_2}=Q_{j_2,<r}\cup Q_{j_2,r}=Q_{j_2,\leq r}$.
    }
    \label{figure:shell:bundle}
\end{figure}

By \cref{lemma:first:separation} and \cref{lemma:queue:and:bundle}, any non-shell bundle created by a dangerous $j\in D'$ is inside $B_j$ and disjoint from other closed balls in $\calB$; 
any bundle created by a safe client is either in $Q_{j,<r}\subseteq B_j$ or disjoint from $\ib_j$ for each $j\in D'$.
Thus, for each $U'\in\calU'$ at \cref{line:delete:shells}, it is not hard to see that $U'\in\calU_{shell}$.
See \cref{figure:shell:bundle} for an illustration. 
Using this crucial observation, we present the following lemmas.

\begin{lemma}\label{lemma:final:solution:dangerous}
For each $j\in D'$, after \cref{algorithm:iterative}, there are at least $(r-1)$ bundles in $\ib_j$, and the $r$-th nearest bundle is within distance at most $\frac{3\gamma^2-\gamma+2}{\gamma(\gamma-1)}\dma(j)$ from $j$.
\end{lemma}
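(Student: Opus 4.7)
The plan is to split on whether $j \in D_0$ or $j \in D_1$ (they partition $D'$ by \cref{lemma:integral:after:rounding}) after verifying that every bundle in $Q_{j,<r}$ remains a subset of $\ib_j$ throughout \cref{algorithm:iterative}. Bundles are removed only at \cref{line:delete:shells} via $\calU' = \{U' \in \calU : U' \cap (\ib_{j_2} \setminus Q_{j_2,<r}) \neq \emptyset\}$, and since \cref{lemma:queue:and:bundle}(i) places each $U \in Q_{j,<r}$ inside $\ib_j$, such $U$ is disjoint from $\ib_{j_2} \setminus Q_{j_2,<r}$ (by \cref{lemma:first:separation} if $j \neq j_2$, or directly if $j = j_2$). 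For $j \in D_1$ the algorithm explicitly assigns $Q_{j,r} = \ib_j \setminus Q_{j,<r}$, producing $r$ bundles inside $\ib_j$ with $r$-th nearest within $\dma(j)/\gamma$, which already clears the target bound.

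For $j \in D_0$ the preserved $Q_{j,<r}$ supplies the $r-1$ bundles inside $\ib_j$, and the candidate for the $r$-th nearest is the current $Q_{j,r}$, always some element of $\calU$. If $Q_{j,r}$ has never been touched, \cref{lemma:bundle:third:away} yields $\dma(j, Q_{j,r}) \leq 3\dma^r(j) = 3\dma(j) < \frac{3\gamma^2-\gamma+2}{\gamma(\gamma-1)}\dma(j)$. Otherwise $Q_{j,r}$ was reassigned to $U = \ib_{j_2} \setminus Q_{j_2,<r}$ during the iteration adding some $j_2$ to $D_1$. Writing $V$ for the value of $Q_{j,r}$ immediately before this reassignment, the discussion preceding \cref{figure:shell:bundle} identifies $V$ as a shell bundle, so $V$ was created by some $j^* \in D'$ with $V \subseteq \calF_{j^*}$ and $V \cap \ib_{j_2} \neq \emptyset$; I would fix any $i^* \in V \cap \ib_{j_2}$ witnessing the intersection.

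The crux is then to bound $\dma(j_2)$ in terms of $\dma(j)$. I would first invoke the proposer rule at \cref{line:propose}: when $j^*$ created its shell via \cref{line:dangerous2}, $\calF'_{j^*}$ had $y$-mass exactly $1$, so its proposal $U_{j^*}$ coincides with $\calF'_{j^*}$ and hence $\dma(j^*, U_{j^*}) = \dma(j^*)$, while $j$ was still eligible at that iteration (either $j^* = j$ trivially, or $j$ adopts $V$ only strictly later, keeping $|\que_j| < r$ then), so the selection rule forces $\dma(j^*) \leq \dma(j, U_j^{\mathrm{then}}) \leq \dma(j)$. Triangulating through $i^*$ gives $d(j^*, j_2) \leq \dma(j^*) + \dma(j_2)/\gamma \leq \dma(j) + \dma(j_2)/\gamma$, and \cref{lemma:first:separation} applied to $\{j^*, j_2\}$ then yields $\dma(j_2) \leq \frac{\gamma+1}{\gamma-1}\dma(j)$ (the opposite case $\dma(j_2) < \dma(j^*)$ is trivially better). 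Finally, for any $i \in U$ one has $d(i, j) \leq \dma(j_2)/\gamma + d(j, j_2)$, and bounding $d(j, j_2) \leq d(i^*, j) + d(i^*, j_2) \leq 3\dma(j) + \dma(j_2)/\gamma$ through the coarser \cref{lemma:bundle:third:away} estimate $d(i^*, j) \leq 3\dma(j)$, the established bound on $\dma(j_2)$ produces exactly $\frac{3\gamma^2-\gamma+2}{\gamma(\gamma-1)}\dma(j)$. The principal obstacle is the adopted case $j^* \neq j$, where $i^* \notin \calF_j$ so only the Lemma-3 estimate $3\dma(j)$ controls $d(i^*, j)$; without the proposer-ordering invariant $\dma(j^*) \leq \dma(j)$ one could only derive $\dma(j_2) \leq \frac{3\gamma+1}{\gamma-1}\dma(j)$, inflating the final constant and breaking the claim.
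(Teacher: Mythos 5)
Your proposal is correct and follows essentially the same route as the paper: preserve $Q_{j,<r}$ via disjointness of the balls, split into $j\in D_1$ (where $Q_{j,r}\subseteq\ib_j$) and $j\in D_0$, and in the latter case triangulate through the creator $j^*$ of the removed shell bundle, using the selection rule to get $\dma(j^*)\leq\dma(j)$ and \cref{lemma:first:separation} to get $\dma(j_2)\leq\frac{\gamma+1}{\gamma-1}\dma(j)$, exactly as in the paper's proof. The only cosmetic difference is that you merge the paper's two sub-cases (shell created by $j$ itself versus by another client) by uniformly using the coarser estimate $d(i^*,j)\leq3\dma(j)$ from \cref{lemma:bundle:third:away}, which still yields the claimed constant $\frac{3\gamma^2-\gamma+2}{\gamma(\gamma-1)}$.
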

\begin{proof}
The initial $(r-1)$ bundles in $Q_{j,<r}$ remain in $\calU$ since they are not shell bundles. For the $r$-th nearest bundle, since $D_0\cup D_1=D'$ according to \cref{lemma:integral:after:rounding}, we consider $D_1,D_0$ separately. 
For $j\in D_1$, $Q_{j,r}\subseteq\ib_j$ by the definition of \cref{algorithm:iterative}, hence $\dma(j,Q_{j,r})\leq\dma(j)/\gamma$. 
For $j\in D_0$, we consider the original bundle $Q_{j,r}$ before iterative rounding. If $Q_{j,r}\notin\calU_{shell}$, it is never removed and has $\dma(j,Q_{j,r})\leq3\dma(j)$ by \cref{lemma:bundle:third:away}, otherwise we consider the following cases.

If $Q_{j,r}\in\calU_{shell}$ is created by $j$, it means $Q_{j,r}$ is created from the candidate bundle proposed by $j$, thus $Q_{j,r}=\calF_j\setminus Q_{j,<r}$ and $\dma(j,Q_{j,r})=\dma(j)$ using \cref{lemma:queue:and:bundle}.
If it is not removed, it satisfies $\dma(j,Q_{j,r})=\dma(j)$; 
otherwise, it is removed when we move another $j'\in D'$ to $D_1$ with a new bundle $Q_{j',r}\subseteq \ib_{j'}$, thus $d(j,j')\leq\dma(j)+\dma(j')/\gamma$ by the triangle inequality. 
Using \cref{lemma:first:separation}, one has $d(j,j')\geq\dma(j')-\dma(j)/\gamma$, and thus $\dma(j')\leq\frac{\gamma+1}{\gamma-1}\dma(j)$. 
Using the triangle inequality, $Q_{j',r}$ (which persists to the end) satisfies $\dma(j,Q_{j',r})\leq\dma(j)+2\dma(j')/\gamma\leq \left(1+\frac{2\gamma+2}{\gamma(\gamma-1)}\right)\dma(j)$.

If $Q_{j,r}\in\calU_{shell}$ is created \emph{earlier} by $j'\neq j$ when $|\que_j|<r$, $\dma(j')=\dma(j',Q_{j,r})\leq\dma(j)$. 
If $Q_{j,r}$ is not removed, one has $\dma(j,Q_{j,r})\leq3\dma(j)$ using \cref{lemma:bundle:third:away};
otherwise, when we move $j''$ to $D_1$ with a new bundle $Q_{j'',r}\subseteq \ib_{j''}$ and remove $Q_{j,r}$, we likewise have $\dma(j'')\leq\frac{\gamma+1}{\gamma-1}\dma(j')$. 
Since $\dma(j')\leq\dma(j)$, we have $\dma(j,Q_{j'',r})\leq 3\dma(j)+2\dma(j'')/\gamma\leq\left(3+\frac{2\gamma+2}{\gamma(\gamma-1)}\right)\dma(j)$ by the triangle inequality.
\end{proof}

\begin{lemma}\label{lemma:final:solution:safe}
For each $j\in \calC\setminus D$ with an initial $|\que_j|=l\leq r$, after \cref{algorithm:iterative}, there are $l$ bundles at distances at most $3\dma^t(j)$ for $t\in[l]$, respectively, and $(r-l)$ more bundles within distances $3\dma^{l+1}(j),\dots,3\dma^{r-1}(j)$, $\frac{7\gamma^2-2\gamma+3}{(\gamma-1)^2}\dma^{r}(j)$, respectively.
\end{lemma}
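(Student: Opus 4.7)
My plan is to split $j$'s $r$ output bundles into two groups: the $l$ bundles already in $\que_j$ at the end of \cref{algorithm:bundle}, and $r-l$ ``borrowed'' bundles from the final queue of a nearby dangerous client. I first argue that each of the $l$ bundles in $\que_j$ survives \cref{algorithm:iterative}: the guard at \cref{line:noshell} ensures a safe queue never acquires a shell bundle, while the only bundles removed at \cref{line:delete:shells} of \cref{algorithm:iterative} are shell bundles (as observed just before \cref{lemma:final:solution:dangerous}). Thus \cref{lemma:bundle:third:away} immediately yields $\dma(j,Q_{j,t})\leq 3\dma^{t}(j)$ for every $t\in[l]$, establishing the first part of the claim.

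If $l<r$, then $\que_j$ was frozen at some iteration $\tau$ of \cref{algorithm:bundle} by either the noalien rule (\cref{line:noalien}) or the noshell rule (\cref{line:noshell}). Let $U$ be the candidate bundle $j$ proposed at $\tau$; since only $l<r$ volume was removed from $\calF_j'$ in earlier iterations, we have $y(\calF_j')\geq r-l\geq 1$ and so $\dma(j,U)\leq\dma^{l+1}(j)$. Both freezing rules identify a dangerous $j'\in D'$ from which to borrow: the witness in \cref{line:noalien}, or the creator of the intersecting shell bundle $U'$ in \cref{line:noshell}. The central task is to bound $d(j,j')$ and $\dma(j')$ by constant multiples of $\dma^{l+1}(j)$. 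In the noalien case \cref{lemma:noalien:safe} directly supplies $\dma(j')\leq\frac{2\gamma}{\gamma-1}\dma^{l+1}(j)$ and, via any $i\in U\cap\ib_{j'}$, $d(j,j')\leq\dma(j,U)+\dma(j')/\gamma\leq\frac{\gamma+1}{\gamma-1}\dma^{l+1}(j)$. In the noshell case, $|\que_{j'}|=r$ holds automatically, and monotonicity of the ``min $\dma$'' proposal rule of \cref{algorithm:bundle} forces $\dma(j',U')\leq\dma(j,U)$; I then replay the volume-packing argument from the proof of \cref{lemma:noalien:safe}, using $y(\ib_{j'})\in[r-1/3,r)$ and $Q_{j',<r}\subseteq\ib_{j'}$, to deduce analogous bounds on $\dma(j')$ and $d(j,j')$.

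Once $d(j,j')$ and $\dma(j')$ are controlled, \cref{lemma:final:solution:dangerous} gives $j'$ a final queue of $r$ bundles: $r-1$ inside $\ib_{j'}$ (distance $\leq\dma(j')/\gamma$ from $j'$) and one at distance $\leq\frac{3\gamma^2-\gamma+2}{\gamma(\gamma-1)}\dma(j')$ from $j'$. Since $|\que_j\cap\que_{j'}|\leq l$ and $|\que_{j'}|=r$, I can borrow $r-l$ distinct bundles from $\que_{j'}\setminus\que_j$. By the triangle inequality, the $r-l-1$ borrowed bundles inside $\ib_{j'}$ are each within $d(j,j')+\dma(j')/\gamma\leq\frac{\gamma+3}{\gamma-1}\dma^{l+1}(j)\leq 3\dma^{l+1}(j)\leq 3\dma^{t}(j)$ of $j$ for $t\in\{l+1,\dots,r-1\}$, and the farthest borrowed bundle is within $\dma(j,U)+\frac{3\gamma^2+1}{\gamma(\gamma-1)}\dma(j')\leq\frac{7\gamma^2-2\gamma+3}{(\gamma-1)^2}\dma^{l+1}(j)\leq\frac{7\gamma^2-2\gamma+3}{(\gamma-1)^2}\dma^{r}(j)$, matching the claimed bounds once the $r$ bundles are sorted by distance.

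The main obstacle is the noshell case, which is not covered directly by \cref{lemma:noalien:safe}. I expect to re-derive $\dma(j')=O(\dma^{l+1}(j))$ and $d(j,j')=O(\dma^{l+1}(j))$ from scratch, using monotonicity of the propose rule together with $|\que_{j'}|=r$ and the structural facts from \cref{lemma:queue:and:bundle}; the subcase where $U$ additionally overlaps some non-shell $Q_{j',t}$ with $t<r$ is particularly delicate, since the clean volume-packing from \cref{lemma:noalien:safe} no longer applies and must be replaced by case analysis on whether $U'\subseteq\ib_{j'}$ and on where $U\cap U'$ sits relative to $\ib_{j'}$.
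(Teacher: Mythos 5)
Your first part (the $l$ bundles of $\que_j$ are non-shell, hence survive \cref{algorithm:iterative}, and \cref{lemma:bundle:third:away} gives $3\dma^t(j)$) and your \cref{line:noalien} case match the paper's proof; your constants there ($\dma(j')\leq\frac{2\gamma}{\gamma-1}\dma(j,U)$, $d(j,j')\leq\frac{\gamma+1}{\gamma-1}\dma(j,U)$, final bound $\frac{7\gamma^2-2\gamma+3}{(\gamma-1)^2}\dma(j,U)$ with $\dma(j,U)\leq\dma^{l+1}(j)\leq\dma^{l'}(j)$ for $l'>l$) check out and coincide with the paper's. The genuine gap is the \cref{line:noshell} case, which you explicitly leave as a plan: you propose to ``replay the volume-packing argument'' of \cref{lemma:noalien:safe} and flag a ``particularly delicate'' subcase you have not resolved. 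That replay is neither carried out nor needed, and without it your claimed bounds on $\dma(j')$ and $d(j,j')$ in this case are unsupported.

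The missing observation that closes this case in one line is structural: a shell bundle $U'$ created by $j'\in D'$ is, by \cref{line:create:shell} and \cref{line:dangerous2}, the candidate bundle $j'$ proposed when $|\que_{j'}|=r-1$, i.e. $U'=\calF_{j'}\setminus Q_{j',<r}$, so $U'$ contains the facilities of $\calF_{j'}$ farthest from $j'$ and $\dma(j',U')=\dma(j')$. Combined with the proposal-order monotonicity you already invoke, this gives $\dma(j')=\dma(j',U')\leq\dma(j,U)$ directly; and since the trigger at \cref{line:noshell} is $U\cap U'\neq\emptyset$, a common facility yields $d(j,j')\leq\dma(j,U)+\dma(j',U')\leq2\dma(j,U)$. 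From these, the $(r-1)$ bundles in $\ib_{j'}$ are within $\bigl(2+\tfrac{1}{\gamma}\bigr)\dma(j,U)<3\dma(j,U)$ of $j$, and the $r$-th bundle from \cref{lemma:final:solution:dangerous} is within $\bigl(2+\tfrac{3\gamma^2-\gamma+2}{\gamma(\gamma-1)}\bigr)\dma(j,U)=\tfrac{5\gamma^2-3\gamma+2}{\gamma(\gamma-1)}\dma(j,U)$, which is below the claimed $\tfrac{7\gamma^2-2\gamma+3}{(\gamma-1)^2}$ bound. In particular, your worried subcase (where $U$ also overlaps some non-shell $Q_{j',t}$, $t<r$) is vacuous: the argument only uses the intersection with the shell bundle $U'$, not the location of $U$ relative to $\ib_{j'}$, so no further case analysis or packing bound ($y(\ib_{j'})\in[r-1/3,r)$, etc.) is required.
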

\begin{proof}
The $l$ bundles in $\que_j$ are never removed since they are not shell bundles, yielding the first assertion by \cref{lemma:bundle:third:away}. We have the following cases when $l<r$.

If $\calF_j'\leftarrow\emptyset$ by $\ib_{j'},j'\in D'$ when $j$ proposes $U$ at \cref{line:noalien}, by \cref{lemma:noalien:safe}, one has $\dma(j,U)\geq(1-\rg)\dma(j')/2$ and thus $d(j,j')\leq\dma(j,U)+\dma(j')/\gamma\leq \frac{\gamma+1}{\gamma-1}\dma(j,U)$. 
There are at least $(r-1)$ bundles in $\ib_{j'}$, each at a distance of $\dma(j,U)+2\dma(j')/\gamma\leq\left(1+\frac{4}{\gamma-1}\right)\dma(j,U)< 3\dma(j,U)$ away from $j$.
By \cref{lemma:final:solution:dangerous}, there is also a distinct $r$-th bundle that is at most $\frac{3\gamma^2-\gamma+2}{\gamma(\gamma-1)}\dma(j')$ away from $j'$, thus at most $d(j,j')+\frac{3\gamma^2-\gamma+2}{\gamma(\gamma-1)}\frac{2}{1-\rg}\dma(j,U)\leq\frac{7\gamma^2-2\gamma+3}{(\gamma-1)^2}\dma(j,U)$ away from $j$ using the triangle inequality.

If $\calF_j'\leftarrow\emptyset$ by $U'\in\calU_{shell}$ when $j$ proposes $U$ at \cref{line:noshell}, suppose $U'$ is created earlier by a dangerous $j'\in D'$. 
We have $\dma(j')=\dma(j',U')\leq\dma(j,U)$ and $d(j,j')\leq \dma(j,U)+\dma(j',U')\leq 2\dma(j,U)$.
Following a similar argument as before, there are at least $(r-1)$ bundles in $\ib_{j'}$, each at a distance of at most $3\dma(j,U)$, 
and a distinct $r$-th bundle within distance $\frac{5\gamma^2-3\gamma+2}{\gamma(\gamma-1)}\dma(j,U)$ from $j$.

In both cases, among the $r$ distinct bundles associated with $j'$, we choose $(r-l)$ bundles that are not in $\que_j$. 
Since $U$ is the candidate bundle when $|\que_j|=l$ and at most $l$ volume of facilities is removed from $\calF_j'$, one has $\dma(j,U)\leq\dma^{l'}(j)$ for $l'>l$, thus the second assertion follows.
\end{proof}

\begin{theorem}\label{theorem:matroid:basic}
There exists a $(138+O(\delta))$-approximation for \emph{\ftm} in polynomial time independent of $\delta>0$.
\end{theorem}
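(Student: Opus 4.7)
The plan is to upper bound the total cost of $\hat F$ by $(138+O(\delta))\,\opt$, using that the LP relaxation value $\ftotal+\sum_{k\in\calC}r\dav(k)$ lower-bounds $\opt$. I would split the cost into three parts---the service cost of safe clients, the service cost of dangerous clients, and the facility opening cost---and handle them via the lemmas already established.

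For a safe client $j\in\calC\setminus D$, \cref{lemma:final:solution:safe} identifies $r$ distinct bundles of $\hat F$ at distances at most $3\dma^t(j)$ for $t<r$ and $\tfrac{7\gamma^2-2\gamma+3}{(\gamma-1)^2}\dma^r(j)$ for $t=r$. Each such bundle contains exactly one facility in $\hat F$ (by~\eqref{lp:iterative1} and integrality), so $d_r(j,\hat F)$ is bounded by this sum. Using the chain~\eqref{equation:chain:of:inequalities}, the first $r-1$ terms sum to at most $3r\dav(j)$; for the last, since $j$ is safe, $\dma^r(j)=\dma(j)\leq 3\gamma\dav^r(j)\leq 3\gamma\,r\dav(j)$. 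This gives $\bigl(3+\tfrac{3\gamma(7\gamma^2-2\gamma+3)}{(\gamma-1)^2}\bigr)r\dav(j)$, which at $\gamma=3$ equals $138\,r\dav(j)$---the leading constant.

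For dangerous clients and the opening cost jointly, I would track the objective of \ref{lp:iterative} throughout \cref{algorithm:iterative}. \cref{lemma:lp:objective} bounds its initial value by $\ftotal+\sum_{k\in D}r\dav(k)$, and \cref{lemma:solution:transition} shows it is non-increasing, decreasing by exactly $\rg\dmd_j\dma(j)$ when $j$ joins $D_0$ and remaining unchanged when $j$ joins $D_1$. By \cref{lemma:integral:after:rounding}, $z^\star$ is integral with $D'=D_0\cup D_1$; its final objective equals $f(\hat F)$ plus the in-bundle service cost (weighted by $\dmd_j$) to the $r$ bundles in $\ib_j$ of each $j\in D_1$ and to the first $r-1$ bundles of each $j\in D_0$. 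I would then account for two missing pieces: the $r$-th bundle for each $j\in D_0$, bounded by $\dmd_j\cdot\tfrac{3\gamma^2-\gamma+2}{\gamma(\gamma-1)}\dma(j)$ via \cref{lemma:final:solution:dangerous}; and the detour for each $k\in D\setminus D'$ piggybacking on its conflicting $j\in D'$, costing at most $6r\dav(k)$ extra by the conflict rule and the order $D'$ is built. Telescoping the drops from \cref{lemma:solution:transition} gives $\sum_{j\in D_0}\dmd_j\dma(j)\leq\gamma(\ftotal+\sum_{k\in D}r\dav(k))$, converting the otherwise uncontrolled $\dma(j)$-terms into LP-budget and yielding an $O(1)$ bound strictly less than $138$ at $\gamma=3$.

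Summing these contributions with $\gamma=3+\delta$ gives the claimed $(138+O(\delta))$-approximation, with safe clients dominating. Polynomial runtime independent of $\delta$ follows because \ref{lp:natural:matroid} is solvable by Edmonds' ellipsoid method, \cref{algorithm:bundle} is clearly polynomial, and \cref{algorithm:iterative} has $O(|D'|)$ iterations. The main obstacle is the $r$-th bundle for $D_0$-clients: for dangerous $j$, $\dma(j)$ is unrelated to $r\dav(j)$, so a pointwise bound is impossible; the key maneuver is to amortize this cost against the very LP savings \cref{lemma:solution:transition} accrues whenever a constraint in \eqref{lp:iterative3} is dropped.
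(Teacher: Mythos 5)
Your proposal is correct and follows essentially the same route as the paper: bound safe clients via \cref{lemma:final:solution:safe} and the chain \eqref{equation:chain:of:inequalities} to get the $138$ constant, track the \ref{lp:iterative} objective through \cref{algorithm:iterative} via \cref{lemma:lp:objective,lemma:solution:transition,lemma:integral:after:rounding}, charge the $r$-th assignment of $D_0$-clients via \cref{lemma:final:solution:dangerous} amortized against the telescoped objective drops ($\sum_{j\in D_0}\dmd_j\dma(j)\leq\gamma(\ftotal+\sum_{k\in D}r\dav(k))$), and add the $6r\dav(k)$ relocation cost for $D\setminus D'$. The constants you compute (net $\tfrac{12+O(\delta)}{\gamma}$ coefficient for the $D_0$ surcharge, $138$ from safe clients) match the paper's \cref{eqn:matroid:safe:objective} and \cref{eqn:matroid:final:objective}.
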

\begin{proof}
By \cref{lemma:lp:objective}, \cref{lemma:solution:transition} and \cref{lemma:integral:after:rounding}, $z^\star\in\{0,1\}^{\calF'}$ has an objective at most $\ftotal+\sum_{k\in D}r\dav(k)-\rg\sum_{j\in D_0}\dmd_j\cdot\dma(j)$ in \ref{lp:iterative}. 
We let $\delta>0$ be small enough and recall $\gamma=3+\delta$.
For $j\in D_1$, the cost of assigning $r$ open facilities in $\hat F$ is equal to its contribution in \ref{lp:iterative}, since there are $r$ bundles in its objective and thus $r$ open facilities due to \eqref{lp:iterative1}. 
For $j\in D_0$, the cost of assigning the nearest $(r-1)$ open facilities is equal to its contribution in \ref{lp:iterative}, and the costs of the $r$-th assignments, summed over $D_0$, is at most $\frac{3\gamma^2-\gamma+2}{\gamma(\gamma-1)}\sum_{j\in D_0}\dmd_j\cdot\dma(j)=\frac{13+O(\delta)}{\gamma}\sum_{j\in D_0}\dmd_j\cdot\dma(j)$ due to \cref{lemma:final:solution:dangerous}.

Next, for a safe client $j\in\calC\setminus D$, the cost of assigning $r$ open facilities is no larger than that of the open facilities in the $r$ bundles described in \cref{lemma:final:solution:safe}, thus at most,
\begin{align}
    &\sum_{t=1}^{r-1}3\dma^t(j)+\frac{7\gamma^2-2\gamma+3}{(\gamma-1)^2}\dma^r(j)\notag\\
    &\leq3\sum_{t=2}^r\dav^t(j)+\frac{21\gamma^3-6\gamma^2+9\gamma}{(\gamma-1)^2}\dav^r(j)\notag\\
    &\leq(138+O(\delta))r\dav(j).
\label{eqn:matroid:safe:objective}
\end{align}

To compensate for the dangerous clients in $D\setminus D'$ that are ``moved to'' $D'$, an additional cost of at most $6\sum_{k\in D}r\dav(k)$ is incurred, since for $k\in D\setminus D'$ marked by $j\in D'$ in the filtering process, we have $d(j,k)\leq6\dav(k)$, and relocating the demand from $j$ to $k$ increases the total cost by at most $6r\dav(k)$ using the triangle inequality. 
To summarize, the total cost is at most
\begin{align}
&\ftotal+\sum_{k\in D}r\dav(k)+\frac{12+O(\delta)}{\gamma}\sum_{j\in D_0}\dmd_j\cdot\dma(j)\notag\\
&+(138+O(\delta))\sum_{j\in\calC\setminus D}r\dav(j)+6\sum_{k\in D}r\dav(k)\notag\\
&\leq(138+O(\delta))\Big(\ftotal+\sum_{j\in\calC}r\dav(j)\Big),
\label{eqn:matroid:final:objective}
\end{align}
where $\rg\sum_{j\in D_0}\dmd_j\cdot\dma(j)$ is at most the total decrease of objective in \cref{algorithm:iterative}, hence bounded by $\ftotal+\sum_{k\in D}r\dav(k)$ using \cref{lemma:lp:objective} and \cref{lemma:solution:transition}. 
The sum in \cref{eqn:matroid:final:objective} is the objective of \ref{lp:natural:matroid}, which is at most the optimum of the original problem, thus the theorem follows.
\end{proof}

\begin{remark}
Let us briefly review the arguments for \cref{eqn:matroid:final:objective}, which are also crucial in the knapsack version:
For each $j\in D'$, the assignment cost is obtained via the $r$ or $(r-1)$ bundles that constitute $B_j$, and an additional cost is incurred using \cref{lemma:final:solution:dangerous} if $j\in D_0$; 
for safe clients, the assignment costs are obtained using \cref{lemma:final:solution:safe} and \cref{eqn:matroid:safe:objective}, where for each safe $j$ and bundle $U$, we only care about $\dma(j,U)$.
These observations are helpful for reusing the arguments and lemmas in the next section.
\end{remark}

\section{Fault-tolerant knapsack median}\label{section:knap:uniform}

We consider \ftk. 
Fix an unknown optimal solution with objective $\opt\geq0$ and total facility opening cost $\opt_f\geq0$. 
We strengthen the natural relaxation using Kumar's guessing method \cite{kumar2012constant} as follows.
Let $\delta_j\geq0$ be the distance from $j$ to its $r$-th nearest open facility in the optimal solution. 
For each $j'$, the distance from $j'$ to its $r$-th nearest open facility is at least $\max\{0,\delta_j-d(j,j')\}$ using the triangle inequality, hence
\begin{equation}
\sum_{j'\in\calC}\max\{0,\delta_j-d(j,j')\}\leq\opt,\,\forall j\in\calC.\label{equation:kumar:knapsack}
\end{equation}

Fix a small $\epsilon>0$. 
We first guess $\opt$ and $\opt_f$ up to a multiplicative factor of $(1+\epsilon)$, via exhaustive search on integer powers of $(1+\epsilon)$.
In the rest of this section, assume we have made the correct guesses such that $\opt'\in[\opt,(1+\epsilon)\opt]$ and $\opt_f'\in[\opt_f,(1+\epsilon)\opt_f]$.
For each $j\in\calC$, we replace $\opt$ with $\opt'$ in \cref{equation:kumar:knapsack} and choose $\Delta_j=\max\{\delta_j\geq0:\text{\cref{equation:kumar:knapsack} holds true}\}$. 
This ensures that $\Delta_j\geq\delta_j$ for each $j\in\calC$, thus we can add new constraints $x_{ij}=0,\,\forall d(i,j)>\Delta_j$ and $y_i=0,\,\forall f_i>\opt_f'$, and the LP is still satisfied by the optimal solution.
The relaxation is defined as follows.
\begin{alignat*}{1}
    \text{min\quad} &
    \sum_{i\in\calF}f_iy_i+\sum_{j\in\calC}\sum_{i\in\calF}x_{ij}d(i,j)\tag{$\operatorname{K-LP}$}\label{lp:natural:knapsack}\\
    \text{s.t.\quad}
    & \sum_{i\in\calF}x_{ij}= r \quad\forall j\in\calC;\qquad \sum_{i\in\calC}w_iy_i \leq W\\
    & 0\leq x_{ij} \leq y_i\leq1\quad \forall i\in\calF,j\in\calC\\
    & x_{ij} =0 \quad\forall i\in\calF,j\in\calC\mathrm{\;s.t.\,}d(i,j) > \Delta_j
    \tag{$\operatorname{K-LP}$.3}\label{lp:natural:knapsack4}\\
    & y_i =0 \quad\forall i\in\calF\mathrm{\;s.t.\,}f_i > \opt_f'.
    \tag{$\operatorname{K-LP}$.4}\label{lp:natural:knapsack5}
\end{alignat*}

Let $(x,y)$ be an optimal solution to \ref{lp:natural:knapsack} in what follows.
We define the two disjoint families $\calB$ and $\calU$ in the same way as \ftm.
In the following auxiliary LP, we still use $D_0$ and $D_1$ to partition $D'$.
Unlike the matroid case, we need extra constraints to make sure each facility is opened at most once.
Let $g:\calF'\rightarrow\calF$ take any duplicate to the original facility; $g^{-1}(i)\subseteq\calF'$ is the collection of copies of $i\in\calF$.
\eqref{lp:iterative:knap2} says that at most one copy of each facility is opened, similar to \cite{hajiaghayi2016constant}.

\begin{alignat*}{1}
    \text{min\;}& \sum_{j\in D'\setminus(D_0\cup D_1)} \dmd_j\left(\left(\sum_{i\in \ib_j}z_i d(i,j)\right) +\frac{\dma(j)}{\gamma}(r-z(\ib_j)) \right)+\\
    &\sum_{j\in D_0}\dmd_j\sum_{i\in Q_{j,<r}}z_i d(i,j)+
    \sum_{j\in D_1}\dmd_j\sum_{i\in Q_{j,\leq r}}z_i d(i,j)+\sum_{i\in\calF'}f_iz_i
    \tag{$\operatorname{K-IR}$}\label{lp:iterative:knap}\\
    \text{s.t.\;} & z(U)=1\quad\forall U\in\calU\tag{$\operatorname{K-IR}$.1}\label{lp:iterative:knap1}\\
    & z(g^{-1}(i))\leq1\quad\forall i\in\calF\tag{$\operatorname{K-IR}$.2}\label{lp:iterative:knap2}\\
    & z(\ib_j)\in [r-1,r]\quad\forall j\in D'\setminus(D_0\cup D_1)\tag{$\operatorname{K-IR}$.3}\label{lp:iterative:knap3}\\
    & \sum_{i\in\calF'}w_iz_i\leq W\tag{$\operatorname{K-IR}$.4}\label{lp:iterative:knap4}\\
    & z_i\in[0,1]\quad\forall i\in\calF'\tag{$\operatorname{K-IR}$.5}\label{lp:iterative:knap5}\\
    & z_i=0\quad\forall i\in\calF',\,f_{g(i)}>\opt'_f.\tag{$\operatorname{K-IR}$.6}\label{lp:iterative:knap6}
\end{alignat*}

The following lemma for \ref{lp:iterative:knap} is analogous to \cref{lemma:lp:objective}.
\begin{lemma}\label{lemma:lp:objective:knap}
Before iterative rounding, \ref{lp:iterative:knap} has optimum at most $\ftotal+\sum_{k\in D}r\dav(k)$.
\end{lemma}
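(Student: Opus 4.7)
The plan is to mimic the proof of \cref{lemma:lp:objective} by exhibiting the (duplicated) LP solution $y$ as a feasible candidate for \ref{lp:iterative:knap} before iterative rounding (so $D_0\cup D_1=\emptyset$) and reusing the per-client cost accounting developed in the matroid section. Concretely, I would set $z_i=y_i$ for every $i\in\calF'$ and verify the constraints one by one.

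Checking feasibility is the only step that differs substantively from \cref{lemma:lp:objective}, so I would work through each family of constraints. \eqref{lp:iterative:knap1} holds because the bundles in $\calU$ are built with $y(U)=1$ in \cref{algorithm:bundle}. \eqref{lp:iterative:knap2} follows from the duplication procedure: the copies $g^{-1}(i)$ of a facility $i\in\calF$ split its original mass, so $\sum_{i'\in g^{-1}(i)}y_{i'}=y_i\leq 1$ by \ref{lp:natural:knapsack}. \eqref{lp:iterative:knap3} is established in the proof of \cref{lemma:first:separation}, where the stronger bound $y(\ib_j)\in[r-1/3,r)$ is shown for every $j\in D'$. Constraints \eqref{lp:iterative:knap4}, \eqref{lp:iterative:knap5}, and \eqref{lp:iterative:knap6} are directly inherited from the corresponding constraints in \ref{lp:natural:knapsack} and \eqref{lp:natural:knapsack5} (note that duplication preserves $w_i$ and $f_i$ on each copy). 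So $y$ is feasible for \ref{lp:iterative:knap}.

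Next I would bound the objective of this candidate. For a dangerous client $j\in D'$, the facilities in $\ib_j=\ball_{1/\gamma}(j)$ contribute exactly $\sum_{i\in \ib_j}y_i d(i,j)$ to both \ref{lp:natural:knapsack} and \ref{lp:iterative:knap}. For each unit of $y$ on $\calF_j\setminus \ib_j$, the distance to $j$ is at least $\dma(j)/\gamma$, so the penalty term $\frac{\dma(j)}{\gamma}(r-y(\ib_j))$ is dominated by $\sum_{i\in\calF_j\setminus \ib_j}y_i d(i,j)$. Adding these gives a contribution of at most $r\dav(j)$ for each $j\in D'$. For each $k\in D\setminus D'$ marked by some $j\in D'$, the filtering step processes clients in non-decreasing order of $\dav$, so $\dav(k)\geq\dav(j)$ and the demand $\dmd_j$ at $j$ can absorb $k$'s contribution by charging $r\dav(k)$ to itself. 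Finally, the facility-opening term $\sum_i f_i z_i=\sum_i f_iy_i=\ftotal$ is the same. Summing over $D$ and adding $\ftotal$ gives the claimed bound $\ftotal+\sum_{k\in D}r\dav(k)$.

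I do not expect a serious obstacle here; the only thing worth double-checking is that the duplication step (which produces $\calF'$ and $g$) keeps the knapsack, per-facility-cap, and expensive-facility-forbidding constraints intact, which it does because $w$, $f$, and the cap $W$ are unchanged and $y$-mass is merely split among copies. Consequently, this lemma follows by the same template as \cref{lemma:lp:objective}, with the matroid-rank constraint replaced by the knapsack and copy-cap constraints in the feasibility check.
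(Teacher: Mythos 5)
Your proposal is correct and matches the paper's intent: the paper gives no separate proof, stating only that the lemma is analogous to \cref{lemma:lp:objective}, and your argument is exactly that analogue — plug in the (duplicated) optimal $y$ of \ref{lp:natural:knapsack}, check the knapsack-specific constraints (copy-cap, weight budget, expensive-facility exclusion, and $y(\ib_j)\in[r-1/3,r)$ for \eqref{lp:iterative:knap3}), and reuse the same per-client charging with $\dav(k)\geq\dav(j)$ for marked clients.
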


We use the same \cref{algorithm:iterative} and obtain the output $z^\star$. 
Akin to \cite{krishnaswamy2018constant}, $z^\star$ is not necessarily integral because of the knapsack constraint \eqref{lp:iterative:knap4}. 
In the rest of this section, we obtain an integral solution $\hat z$ from $z^\star$ and show the following theorem.

\begin{theorem}\label{theorem:knapsack:basic}
There exists a $(143.34+O(\epsilon))$-approximation for \emph{\ftk}. 
The running time of the algorithm is a polynomial of the input size and $1/\epsilon$.
\end{theorem}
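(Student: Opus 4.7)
My plan is to inherit essentially the entire matroid-case framework and patch the one place where the knapsack constraint prevents integrality. The bundle construction of \cref{algorithm:bundle}, the iterative-rounding loop of \cref{algorithm:iterative}, and the geometric lemmas \cref{lemma:first:separation,lemma:bundle:third:away,lemma:noalien:safe,lemma:queue:and:bundle,lemma:final:solution:dangerous,lemma:final:solution:safe} all use only distances and the mutual disjointness of $\calU$ and $\calB$, so they carry over verbatim to \ref{lp:iterative:knap}. The initial feasibility bound \cref{lemma:lp:objective:knap} and the per-iteration objective decrement of \cref{lemma:solution:transition} likewise port directly: Kumar's guessing step guarantees that the LP-optimal $(x,y)$ satisfies \eqref{lp:iterative:knap4}--\eqref{lp:iterative:knap6}, and the decrement $\rg\,\dmd_j\,\dma(j)$ when $j$ enters $D_0$ depends only on \eqref{lp:iterative:knap1} and the bundle-level analysis.

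First I would execute \cref{algorithm:iterative} on \ref{lp:iterative:knap} and obtain a vertex solution $z^\star$ at termination. At that point every constraint in \eqref{lp:iterative:knap3} has been absorbed into $D_0 \cup D_1 = D'$, so the remaining system consists of the partition matroid \eqref{lp:iterative:knap1}, the copy bounds \eqref{lp:iterative:knap2}, the single knapsack \eqref{lp:iterative:knap4}, and \eqref{lp:iterative:knap5}--\eqref{lp:iterative:knap6}. Dropping \eqref{lp:iterative:knap4} leaves a totally unimodular system, so adding one extra knapsack row permits at most one additional linearly independent tight constraint at any vertex; consequently $z^\star$ is integral except possibly on at most two fractional bundles $U_1, U_2 \in \calU$, each carrying $O(1)$ fractional coordinates.

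Next I would round $z^\star$ to an integer $\hat z$. For each fractional bundle $U_l$, open the facility $i^\star_l \in U_l$ that minimizes its per-facility contribution to the objective of \ref{lp:iterative:knap}; since $z^\star|_{U_l}$ is a probability distribution over $U_l$, the deterministic choice costs no more than the fractional cost, so the service-cost charging of \cref{lemma:final:solution:dangerous,lemma:final:solution:safe} transfers unchanged. The choice may violate \eqref{lp:iterative:knap4} by at most $w_{i^\star_1}+w_{i^\star_2}$, which I would repair by deleting one facility from the heavier bundle and rerouting its $r$-th demand via \cref{equation:kumar:knapsack}: because $\Delta_j \ge \delta_j$, the total rerouting cost is $O(\opt')$. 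All opened facilities satisfy $f_i \le \opt_f'$ by \eqref{lp:iterative:knap6}, so the at-most-two extra openings contribute only $O(\opt_f')$. Combining these overheads with the cost accounting in \cref{eqn:matroid:final:objective} — which still applies because \cref{lemma:final:solution:dangerous,lemma:final:solution:safe} are purely geometric — yields a total cost of $(143.34 + O(\epsilon))\opt$ after choosing $\delta = \Theta(\epsilon)$.

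The running-time claim follows from the polynomial number of guesses for $\opt$ and $\opt_f$ (to within a $(1+\epsilon)$-factor) times the polynomial cost of solving and rounding each LP. The main obstacle I anticipate is the repair step for the two fractional bundles: verifying that the discarded-bundle reassignment through Kumar's inequality, the extra opening cost bounded by $\opt_f'$, and the bundle-level weight swap together fit inside the $143.34 - 138 \approx 5.34$ slack requires a careful charging argument using $\Delta_j \ge \delta_j$, $f_i \le \opt_f'$, and the fact that $w_{i^\star_l} \le W$. This is analogous to the sparsification step used in \cite{krishnaswamy2018constant,gupta2020structural} for \kmd, but considerably simpler because we are repairing only two fractional bundles rather than a full pseudo-integral solution.
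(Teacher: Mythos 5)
Your high-level plan matches the paper (reuse the matroid framework, run \cref{algorithm:iterative} on \ref{lp:iterative:knap}, and patch the loss of integrality caused by the knapsack row), but the structural claim on which your rounding rests is wrong, and the rounding itself breaks the constraints that matter. A vertex of the system \eqref{lp:iterative:knap1}--\eqref{lp:iterative:knap6} is \emph{not} ``integral except on at most two fractional bundles with $O(1)$ fractional coordinates.'' Adding one knapsack row to an integral polytope only tells you the vertex lies on an edge of that polytope, i.e.\ it is a combination of two integral solutions, and these can differ on arbitrarily many coordinates. The correct quantitative statement, which is the heart of the paper's argument, comes from counting tight constraints: with $l$ fractional variables and $T$ non-tight facilities (those $i\in\calF$ with $z^\star(g^{-1}(i))\in(0,1)$), disjointness of the bundles and of the copy classes gives $\floor{l/2}+\floor{(l-T)/2}+1\geq l$, hence $T\in\{0,1,2\}$; the fractional support then forms a long alternating chain of tight bundles and co-located copy pairs, touching up to $k=\Theta(l)$ bundles. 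Because many bundles can be fractional, your weight-violation bound ``at most $w_{i^\star_1}+w_{i^\star_2}$'' and the subsequent one-facility repair have no basis. Also, the premise that $D_0\cup D_1=D'$ at termination fails here: in the $T=1$ case the ball $B_{j_0}$ containing the non-tight facility has $j_0\notin D_0\cup D_1$, and that client is exactly the one whose $r$-th assignment must be charged, once, against \cref{equation:kumar:knapsack}.

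Your rounding step has a second, independent flaw: choosing the cheapest facility inside each fractional bundle ignores \eqref{lp:iterative:knap2}. Two fractional bundles on the chain share co-located copies of the same original facility, so per-bundle greedy choices can open the same facility twice; then some bundle no longer contributes a distinct open facility and the fault-tolerance guarantee (each client gets $r$ distinct facilities, one per bundle in its queue) collapses. The paper avoids this by rounding along the chain (open $i_1,i_3,\dots$, close $i_0,i_2,\dots$), which keeps exactly one open copy per copy class and one open facility per bundle, never increases service cost, and decreases (for $T=1$) or controls (for $T=2$, using $w_{i_0}\geq w_{i_{2k-1}}$ and one extra opening of cost at most $\opt_f'$ via \eqref{lp:iterative:knap6}) the knapsack weight; the $T=0$ case needs a separate flow-based reassignment of bundles to upper-tight facilities. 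Finally, your overheads are only ``$O(\opt')$'' and ``$O(\opt_f')$'', and you defer the charging that would pin them down; the stated constant $143.34$ requires exactly one application of \cref{equation:kumar:application} (contributing $\frac{3\gamma^2-\gamma+2}{\gamma(\gamma-1)}\opt'\approx\frac{13}{3}\opt'$) plus at most one extra opening cost $\opt_f'\leq(1+\epsilon)\opt$ on top of the $(138+O(\delta))$ matroid-style bound, which your argument as written does not deliver.
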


\subsection{Obtaining an integral solution}

For $i\in\calF$ s.t. $z^\star(g^{-1}(i))\in(0,1)$, we say $i$ is \emph{non-tight}; 
we say $i\in\calF$ is \emph{upper-tight} if $z^\star(g^{-1}(i))=1$, and \emph{lower-tight} if $z^\star(g^{-1}(i))=0$. 
Suppose \cref{algorithm:iterative} returns $z^\star$ with \emph{none} of the remaining constraints in \eqref{lp:iterative:knap3} tight, $l$ strictly fractional variables in $z^\star$ and $T\geq0$ non-tight facilities. 
It is well-known that (see, e.g., \cite{schrijver2003combinatorial}) there exist $l$ tight and linearly-independent (of the trivial tight constraints $z_i\in[0,1]$) constraints in \eqref{lp:iterative:knap1}, \eqref{lp:iterative:knap2} and \eqref{lp:iterative:knap4}. 
Since the bundles are disjoint, at most $\floor{l/2}$ constraints in \eqref{lp:iterative:knap1} are tight on these $l$ fractions;
because there are $T$ non-tight facilities, at most $\floor{(l-T)/2}$ constraints in \eqref{lp:iterative:knap2} are tight on these $l$ fractions. 
We then have $\floor{l/2}+\floor{(l-T)/2}+1\geq l$, and thus $T\in\{0,1,2\}$.


We study the three cases of $T\in\{0,1,2\}$ in the following and apply different rounding algorithms. 
The approximation guarantee in \cref{theorem:knapsack:basic} is the maximum over these cases.

\subparagraph{When $T=1$.}
Because $\floor{l/2}+\floor{(l-1)/2}+1=l$, we have exactly $\floor{l/2}$ tight constraints in \eqref{lp:iterative:knap1}, $\floor{(l-1)/2}$ in \eqref{lp:iterative:knap2} and one in \eqref{lp:iterative:knap4}. 

Since there is only one non-tight facility denoted by $i_0\in\calF'$, and these $l$ constraints are linearly independent, it is not hard to see that $l$ must be an odd number $(2k+1)$, and these $(2k+1)$ facilities can be renamed $(i_0,i_1,i_2,\dots,i_{2k-1},i_{2k})$ such that:
\begin{itemize}
    \item $\{i_0,i_1\},\dots,\{i_{2k-2},i_{2k-1}\}$ are $k$ tight bundles; 
    i.e., one has $z^\star_{i_0}+z^\star_{i_1}=\cdots=z^\star_{i_{2k-2}}+z^\star_{i_{2k-1}}=1$.
    \item $\{i_{1},i_{2}\},\dots,\{i_{2k-1},i_{2k}\}$ are copies of $k$ upper-tight facilities;
    i.e., each pair is co-located, $z^\star_{i_{1}}+z^\star_{i_{2}}=\cdots=z^\star_{i_{2k-1}}+z^\star_{i_{2k}}=1$.
\end{itemize}

To obtain an integral solution, we let $\hat z_{i_1}=\cdots=\hat z_{i_{2k-1}}=1$ and $\hat z_{i_0}=\cdots=\hat z_{i_{2k}}=0$. 
For each $i\notin\{i_0,\dots,i_{2k}\}$, set $\hat z_i=z^\star_i$. 
Replace the bundles $\{i_0,i_1\},\dots,\{i_{2k-2},i_{2k-1}\}$ with their subsets $\{i_1\},\dots,\{i_{2k-1}\}$, respectively.

Our final solution is then $\hat F=\{g(i):i\in\calF',\hat z_i=1\}$, and $\sum_{i\in \hat F}w_i\leq\sum_iw_iz^\star_i\leq W$ follows since we reduce the total weight by closing $i_0$. 
Therefore, $\hat F$ is feasible.
Similarly, we also have $\sum_{i\in \hat F}f_i\leq\sum_if_iz^\star_i$. 

Next, we focus on the cost of assigning facilities to clients.
Since each closed ball $B_j\in\calB$ is defined solely based on the underlying metric, for an original $i\in\calF$, either $g^{-1}(i)\subseteq B_j$ or $g^{-1}(i)\cap B_j=\emptyset$; 
we say that $B_j\subseteq\calF'$ is \emph{copy-consistent}.

If $\exists j_0\in D'$ s.t. $i_0\in\ib_{j_0}$, because the closed balls are copy-consistent and disjoint, we have $z^\star(B_{j'})\in\Z,\,\forall j'\in D'\setminus\{j_0\}$ and $D_0\cup D_1=D'\setminus\{j_0\}$ by \cref{algorithm:iterative}.
To assign $r$ open facilities to all clients, we have the following analysis.
\begin{enumerate}
    \item For $D_0\cup D_1$ and $\calC\setminus D$. 
    First, we have only changed some bundles to their subsets and there is still exactly one open facility in each bundle.
    Second, it is clear that by refining $B_j=Q_{j,<r}$ for $j\in D_0$ and $B_j=Q_{j,\leq r}$ for $j\in D_1$ using \cref{algorithm:iterative} (see the analysis in \cref{lemma:solution:transition}), the scopes of sums for $D_0\cup D_1$ in \ref{lp:iterative:knap} have also become copy-consistent.
    Since $i_0\notin B_j$ for each $j\in D_0\cup D_1$, $z^\star$ and $\hat z$ incur the same contributions.
    Thus, the same arguments for these clients in \cref{lemma:final:solution:dangerous}, \cref{lemma:final:solution:safe} and \cref{theorem:matroid:basic} are still valid.
    
    \item For $j_0$, there are $(r-1)$ open facilities in $\ib_{j_0}$, due to \eqref{lp:iterative:knap3} and that we close $i_0$ as described above. 
    Similar to \cref{lemma:final:solution:dangerous}, $j_0$ needs to pay at most $\frac{3\gamma^2-\gamma+2}{\gamma(\gamma-1)}\cdot\dmd_{j_0}\dma(j_0)$ for the $r$-th assignment. 
    However, because $j_0$ is \emph{not} in $D_0$, we cannot use \cref{eqn:matroid:final:objective} and have to bound the quantity above using another method.
    For each $j\in D$ marked by $j_0$, we have $d(j,j_0)\leq6\dav(j)$ by definition. 
    Using \eqref{lp:natural:knapsack4} and \cref{equation:kumar:knapsack}, one has $\dma(j_0)\leq \Delta_{j_0}$ and        \begin{equation}
         \dmd_{j_0}\dma(j_0)=\sum_{\substack{j\text{ marked}\\ \text{by }j_0}}\dma(j_0)
        \leq\opt'+\sum_{\substack{j\text{ marked}\\
        \text{by }j_0}}6\dav(j).\label{equation:kumar:application}
        \end{equation}
\end{enumerate}

Using \cref{equation:kumar:application} and the same analysis as \cref{eqn:matroid:final:objective}, the total cost of $\hat F$ is at most
$\frac{3\gamma^2-\gamma+2}{\gamma(\gamma-1)}\opt'\leq(13/3+O(\epsilon+\delta))\opt$ plus $(138+O(\delta))\cdot\opt$, thus the approximation ratio is $(142.34+O(\epsilon+\delta))$.
    
On the other hand, if $\forall j_0\in D',\,i_0\notin\ib_{j_0}$, we have $D'=D_0\cup D_1$ since no such $\ib_j$ contains non-tight facilities. 
Again, we only change some bundles into their subsets during post-processing, thus the analysis in \cref{theorem:matroid:basic} still holds with an approximation ratio of $(138+O(\delta))$.

\subparagraph{When $T=2$.}

To satisfy $\floor{l/2}+\floor{(l-2)/2}+1\geq l$, we must have $l=2k$ as an even number.
Because there are two non-tight facilities denoted by 
$i_0,i_{2k-1}\in\calF'$,
these $2k$ facilities can be similarly renamed $(i_0,i_1,i_2,\dots,i_{2k-1})$ such that:
\begin{itemize}
    \item $\{i_0,i_1\},\dots,\{i_{2k-2},i_{2k-1}\}$ are $k$ tight bundles; 
    i.e., one has $z^\star_{i_0}+z^\star_{i_1}=\cdots=z^\star_{i_{2k-2}}+z^\star_{i_{2k-1}}=1$.
    \item $\{i_{1},i_{2}\},\dots,\{i_{2k-3},i_{2k-2}\}$ are copies of $(k-1)$ upper-tight facilities;
    i.e., each pair is co-located, $z^\star_{i_{1}}+z^\star_{i_{2}}=\cdots=z^\star_{i_{2k-3}}+z^\star_{i_{2k-2}}=1$.
\end{itemize}

We assume w.l.o.g. that $w_{i_0}\geq w_{i_{2k-1}}$.
To obtain an integral solution, let $\hat z_{i_1}=\cdots=\hat z_{i_{2k-1}}=1$ and $\hat z_{i_0}=\cdots=\hat z_{i_{2k-2}}=0$.
Set $\hat z_i=z^\star_i$ for each $i\notin\{i_0,\dots,i_{2k-1}\}$.
Replace the bundles $\{i_0,i_1\},\dots,\{i_{2k-2},i_{2k-1}\}$ with their subsets $\{i_1\},\dots,\{i_{2k-1}\}$.
Our final solution is $\hat F=\{g(i):i\in\calF',\hat z_i=1\}$, and $\sum_{i\in\hat F}w_i\leq\sum_iw_iz^\star_i\leq W$ follows since $z^\star_{i_0}+z^\star_{i_{2k-1}}=1$ from the above and $\hat z_{i_0}w_{i_0}+\hat z_{i_{2k-1}}w_{i_{2k-1}}\leq z^\star_{i_0}w_{i_0}+z^\star_{i_{2k-1}}w_{i_{2k-1}}$ using $w_{i_0}\geq w_{i_{2k-1}}$.
For the total opening cost, we have $\sum_{i\in \hat F}f_i\leq f_{g(i_{2k-1})}+\sum_if_iz^\star_i\leq\opt_f'+\sum_if_iz^\star_i$ due to \eqref{lp:iterative:knap6}. 

For the cost of assigning facilities,
let us first assume that $i_0$ and $i_{2k-1}$ do not belong to the same closed ball in $\calB$ (if there are any).
The analysis in this case is almost the same as the case when $T=1$, except that an additional facility opening cost $\opt_f'\leq(1+\epsilon)\opt_f\leq(1+\epsilon)\opt$ increases the approximation ratio to $(143.34+O(\epsilon+\delta))$.

Then, if $i_0$ and $i_{2k-1}$ indeed belong to the same $B_{j_0}\in\calB$, since $z^\star_{i_0}+z^\star_{i_{2k-1}}=1$, we have $j_0\in D_0\cup D_1$.
If $j_0\in D_1$, by changing from $z^\star$ to $\hat z$, the extra contribution of $j_0$ to \ref{lp:iterative:knap} can be similarly bounded using \cref{equation:kumar:application};
if $j_0\in D_0$, we additionally use \cref{lemma:solution:transition} and \cref{lemma:final:solution:dangerous} on $j_0$ for its $r$-th assignment cost similar to \cref{eqn:matroid:final:objective}.
In both cases, the approximation ratio is at most $(143.34+O(\epsilon+\delta))$.

\subparagraph{When $T=0$.}

Every $i\in\calF$ is either upper-tight or lower-tight, and it is easy to see that $D'=D_0\cup D_1$. 
Let $\hat F=\{i\in\calF:z^\star(g^{-1}(i))=1\}$, and it is clear that $\sum_{i\in\hat F}w_i\leq W$ since $z^\star$ is feasible to \ref{lp:iterative:knap}.
To show that $\hat F$ is a good approximate solution, we first reconstruct some bundles.
Let $\calF_1\subseteq\calF'$ be the set of fractional facilities, and $\calU_1\subseteq\calU$ be the set of bundles supported on $\calF_1$;
by \eqref{lp:iterative:knap1}, it is clear that all other bundles are supported on integral variables in $z^\star$.
Each facility in $g(\calF_1)$ is upper-tight, hence $|g(\calF_1)|\geq|\calU_1|$.
We construct a network flow instance with source $s$ and sink $t$ as follows.

\begin{itemize}
    \item In the ``original'' layer, put $o_i$ for each $i\in g(\calF_1)$. Connect them to $s$ using links with unit capacity.
    \item In the ``copy'' layer, put $u_{i'}$ for each $i'\in\calF_1$. 
    Connect $u_{i'}$ to $o_{g(i')}$ using a link with unit capacity.
    \item In the ``bundle'' layer, put $v_U$ for each $U\in\calU_1$. 
    Also add a dummy node $v_\perp$ to this layer. 
    Connect each $v_U$ to $t$ using a link with unit capacity, and $v_\perp$ to $t$ using a link with capacity $(|g(\calF_1)|-|\calU_1|)$.
    \item Connect $(u_{i'},v_U)$ using a link with unit capacity if and only if $i'\in U$. Also connect $(u_{i'},v_\perp)$ using a link with unit capacity for each $i'\in\calF_1$.
\end{itemize}

We obtain an integral solution via an integral flow on the network.
Since members of $\calU_1$ are disjoint and supported on $\calF_1$, and every member of $g(\calF_1)$ is upper-tight, $z^\star$ naturally induces a fractional flow $\tilde f$ having flow value $z^\star(\calF_1)=|g(\calF_1)|$.
Because the capacities are all integers, we can round $\tilde f$ to an arbitrary integral flow $\hat f$ with the same flow value (see, e.g., \cite{schrijver2003combinatorial}). 
It is clear that its flow on each bundle node $v_U$ is 1, and its flow on $v_\perp$ is $(|g(\calF_1)|-|\calU_1|)$.
Replace $U\in\calU_1$ with $\{i'\}\subseteq\calF_1$ if and only if $\hat f(u_{i'},v_U)=1$, and $i'$ was previously in $U$ by the definition of the network; 
every bundle in $\calU_1$ becomes its own subset.
For $i'\notin\calF_1$, we open $g(i')$ if and only if $z^\star_{i'}=1$; we also open each $i\in g(\calF_1)$.
Easy to see that this solution is the same as $\hat F$.

Because we open each upper-tight $i\in\calF$ w.r.t. $z^\star$ and only change some bundles to their corresponding subsets, the same analysis in \cref{theorem:matroid:basic} holds and the approximation ratio in this case is $(138+O(\delta))$.

\section*{Acknowledgements}

The author thanks Jian Li for some helpful discussions.
The author thanks the editors and the anonymous referees at \textit{Operations Research Letters} for their constructive comments; in particular, the author would like to thank the late Gerhard Woeginger, who was an area editor at \textit{Operations Research Letters}, for providing valuable comments on improving the presentation of this paper.

\bibliographystyle{plainurl}
\bibliography{references.bib}

\begin{thebibliography}{10}

\bibitem{byrka2015knapsack}
Jaroslaw Byrka, Thomas~W. Pensyl, Bartosz Rybicki, Joachim Spoerhase, Aravind
  Srinivasan, and Khoa Trinh.
\newblock An improved approximation algorithm for knapsack median using
  sparsification.
\newblock {\em Algorithmica}, 80(4):1093--1114, 2018.
\newblock \href {https://doi.org/10.1007/s00453-017-0294-4}
  {\path{doi:10.1007/s00453-017-0294-4}}.

\bibitem{byrka2010fault}
Jaroslaw Byrka, Aravind Srinivasan, and Chaitanya Swamy.
\newblock Fault-tolerant facility location: {A} randomized dependent
  {LP}-rounding algorithm.
\newblock In {\em Integer Programming and Combinatorial Optimization, 14th
  International Conference}, pages 244--257, 2010.
\newblock \href {https://doi.org/10.1007/978-3-642-13036-6\_19}
  {\path{doi:10.1007/978-3-642-13036-6\_19}}.

\bibitem{charikar2012dependent}
Moses Charikar and Shi Li.
\newblock A dependent {LP}-rounding approach for the $k$-median problem.
\newblock In {\em Automata, Languages, and Programming - 39th International
  Colloquium, Proceedings, Part {I}}, pages 194--205, 2012.
\newblock \href {https://doi.org/10.1007/978-3-642-31594-7\_17}
  {\path{doi:10.1007/978-3-642-31594-7\_17}}.

\bibitem{chen2016matroid}
Danny~Z. Chen, Jian Li, Hongyu Liang, and Haitao Wang.
\newblock Matroid and knapsack center problems.
\newblock {\em Algorithmica}, 75(1):27--52, 2016.
\newblock \href {https://doi.org/10.1007/s00453-015-0010-1}
  {\path{doi:10.1007/s00453-015-0010-1}}.

\bibitem{edmonds2001submodular}
Jack~R. Edmonds.
\newblock Submodular functions, matroids, and certain polyhedra.
\newblock In {\em Combinatorial Optimization - Eureka, You Shrink!}, pages
  11--26, 2001.
\newblock \href {https://doi.org/10.1007/3-540-36478-1\_2}
  {\path{doi:10.1007/3-540-36478-1\_2}}.

\bibitem{gupta2020structural}
Anupam Gupta, Benjamin Moseley, and Rudy Zhou.
\newblock Structural iterative rounding for generalized $k$-median problems.
\newblock In {\em 48th International Colloquium on Automata, Languages, and
  Programming}, pages 77:1--77:18, 2021.
\newblock \href {https://doi.org/10.4230/LIPIcs.ICALP.2021.77}
  {\path{doi:10.4230/LIPIcs.ICALP.2021.77}}.

\bibitem{hajiaghayi2016constant}
MohammadTaghi Hajiaghayi, Wei Hu, Jian Li, Shi Li, and Barna Saha.
\newblock A constant factor approximation algorithm for fault-tolerant
  $k$-median.
\newblock {\em {ACM} Trans. Algorithms}, 12(3):36:1--36:19, 2016.
\newblock \href {https://doi.org/10.1145/2854153} {\path{doi:10.1145/2854153}}.

\bibitem{hochbaum1986unified}
Dorit~S. Hochbaum and David~B. Shmoys.
\newblock A unified approach to approximation algorithms for bottleneck
  problems.
\newblock {\em Journal of the {ACM}}, 33(3):533--550, 1986.
\newblock \href {https://doi.org/10.1145/5925.5933}
  {\path{doi:10.1145/5925.5933}}.

\bibitem{jain2004approximation}
Kamal Jain and Vijay~V. Vazirani.
\newblock An approximation algorithm for the fault tolerant metric facility
  location problem.
\newblock {\em Algorithmica}, 38(3):433--439, 2004.
\newblock \href {https://doi.org/10.1007/s00453-003-1070-1}
  {\path{doi:10.1007/s00453-003-1070-1}}.

\bibitem{krishnaswamy2011matroid}
Ravishankar Krishnaswamy, Amit Kumar, Viswanath Nagarajan, Yogish Sabharwal,
  and Barna Saha.
\newblock The matroid median problem.
\newblock In {\em Proceedings of the Twenty-Second Annual {ACM-SIAM} Symposium
  on Discrete Algorithms}, pages 1117--1130, 2011.
\newblock \href {https://doi.org/10.1137/1.9781611973082.84}
  {\path{doi:10.1137/1.9781611973082.84}}.

\bibitem{krishnaswamy2018constant}
Ravishankar Krishnaswamy, Shi Li, and Sai Sandeep.
\newblock Constant approximation for $k$-median and $k$-means with outliers via
  iterative rounding.
\newblock In {\em Proceedings of the 50th Annual {ACM} {SIGACT} Symposium on
  Theory of Computing}, pages 646--659, 2018.
\newblock \href {https://doi.org/10.1145/3188745.3188882}
  {\path{doi:10.1145/3188745.3188882}}.

\bibitem{kumar2012constant}
Amit Kumar.
\newblock Constant factor approximation algorithm for the knapsack median
  problem.
\newblock In {\em Proceedings of the Twenty-Third Annual {ACM-SIAM} Symposium
  on Discrete Algorithms}, pages 824--832, 2012.
\newblock \href {https://doi.org/10.1137/1.9781611973099.66}
  {\path{doi:10.1137/1.9781611973099.66}}.

\bibitem{schrijver2003combinatorial}
Alexander Schrijver.
\newblock {\em Combinatorial optimization: polyhedra and efficiency},
  volume~24.
\newblock Springer, 2003.

\bibitem{swamy2016improved}
Chaitanya Swamy.
\newblock Improved approximation algorithms for matroid and knapsack median
  problems and applications.
\newblock {\em {ACM} Trans. Algorithms}, 12(4):49:1--49:22, 2016.
\newblock \href {https://doi.org/10.1145/2963170} {\path{doi:10.1145/2963170}}.

\bibitem{swamy2008fault}
Chaitanya Swamy and David~B. Shmoys.
\newblock Fault-tolerant facility location.
\newblock {\em {ACM} Transactions on Algorithms}, 4(4):51:1--51:27, 2008.
\newblock \href {https://doi.org/10.1145/1383369.1383382}
  {\path{doi:10.1145/1383369.1383382}}.

\bibitem{yan2015lp}
Li~Yan and Marek Chrobak.
\newblock {LP}-rounding algorithms for the fault-tolerant facility placement
  problem.
\newblock {\em J. Discrete Algorithms}, 33:93--114, 2015.
\newblock \href {https://doi.org/10.1016/j.jda.2015.03.004}
  {\path{doi:10.1016/j.jda.2015.03.004}}.

\end{thebibliography}

\end{document}